\documentclass{article}%
\usepackage{amssymb}
\usepackage{amsmath}
\usepackage{amsfonts}
\usepackage{graphicx}%
\setcounter{MaxMatrixCols}{30}
\providecommand{\U}[1]{\protect\rule{.1in}{.1in}}
\providecommand{\U}[1]{\protect\rule{.1in}{.1in}}
\newtheorem{theorem}{Theorem}[section]

\newtheorem{conjecture}[theorem]{Conjecture}

\newtheorem{definition}[theorem]{Definition}

\newtheorem{lemma}[theorem]{Lemma}

\newtheorem{proposition}[theorem]{Proposition}
\newtheorem{remark}[theorem]{Remark}

\newenvironment{proof}[1][Proof]{\noindent\textbf{#1.} }{\ \rule{0.5em}{0.5em}}
\begin{document}

\title{The planar 3-body problem $\mathrm{II}$: reduction to pure shape and spherical
geometry (2nd version)}
\author{Wu-Yi Hsiang\\Department of Mathematics\\University of California, Berkeley
\and Eldar Straume\\Department of Mathematical Sciences\\Norwegian University of Science and \\Technology, Trondheim, Norway}
\maketitle

\begin{abstract}
Geometric reduction of the Newtonian planar three-body problem is investigated
in the framework of equivariant Riemannian geometry, which reduces the study
of trajectories of three-body motions to the study of their moduli curves,
that is, curves which record the change of size and shape, in the moduli space
of oriented mass-triangles. The latter space is a Riemannian cone over the
shape sphere $\simeq S^{2}$, and the shape curve is the image curve on this
sphere. It is shown that the time parametrized moduli curve is in general
determined by the relative geometry of the shape curve and the shape potential
function. This also entails the reconstruction of time, namely the geometric
shape curve determines the time parametrization of the moduli curve, hence
also of the three-body motion itself, modulo a fixed rotation of the plane.

The first version of this work is an unpublished paper from 2012, and the
present version is an editorial revision of this.

\end{abstract}
\tableofcontents

\section{Introduction}

Traditionally, the 3-body problem in celestial mechanics is most often studied
in the framework of Hamiltonian mechanics, cf. e.g. \cite{SM} . However, since
1994 (cf. \cite{HS-0}) the authors have studied the kinematic geometry of
3-body systems and geometric reduction of 3-body motions, in the framework of
equivariant Riemannian geometry and inspired by Jacobi's geometrization of
Lagrange's least action principle. For a general introduction to this
reduction approach we refer to the monograph\ \cite{HS-1}.

In the general study of the 3-body problem, the Newtonian dynamical equations
are formulated at the level of the configuration space $M$, which consists of
all ordered positions of the three bodies in a Euclidean 3-space, called
\emph{m-triangles}, and the solutions $t\rightarrow\gamma(t)$ of the 3-body
motions are the trajectories of these equations. The group of rigid motions
acts naturally on m-triangles, whose orbits are the congruence classes of
m-triangles, and we shall denote by $\bar{M}$ the \emph{moduli space}
consisting of all congruence classes.\emph{ }The image $\bar{\gamma}(t)$ of
the trajectory $\gamma(t)$ in $\bar{M}$ is referred to as the \emph{moduli
curve}. This is the first step of the geometric reduction, namely the
reduction of the 3-body problem to dynamics and analysis in the quotient space
$\bar{M}$ of $M$.

Indeed, the reconstruction of the 3-body motion $\gamma(t)$ from the curve
$\bar{\gamma}(t)$ is a purely geometric problem, essentially the lifting
problem of a fiber bundle $M\rightarrow$ $\bar{M}$. Generally, the trajectory
is uniquely determined, modulo a global congruence, by the moduli curve, but
in this paper we shall not be concerned with the associated lifting procedure.

In the second step of the reduction procedure, $\bar{M}$ is replaced by a
subspace $M^{\ast}$ called the \emph{shape space}, consisting of congruence
classes of m-triangles of unit size, and hence the points of $M^{\ast}$
represent the (nonzero) similarity classes of m-triangles. It is a crucial
fact that $\bar{M}$ naturally identifies with the 3-space $\mathbb{R}^{3}$ and
with $M^{\ast}$ as its (unit) sphere $S^{2}$, with respect to the naturally
induced metric (see below). The projected image $\gamma^{\ast}(t)$ of the
moduli curve $\bar{\gamma}(t)$ in $M^{\ast}$ is referred to as the \emph{shape
curve}. By a \emph{geometric curve} on the sphere we shall mean a curve
parametrized by arc-length (or with no specific parametrization).

Along these lines, in a previous paper \cite{HS-2} the authors have
investigated subtle questions pertaining to 3-body motions with vanishing
angular momentum, which should be regarded as Part I of the present study. In
Part I we showed that the moduli curves $\bar{\gamma}(t)$ representing 3-body
motions of vanishing angular momentum are the geodesics of the induced
Jacobi-metric on the moduli space $\bar{M}$, or equivalently, they are also
the solutions of the reduced Lagrange's least action principle. Among the
major results from Part I we mention the following characteristics for the
case of zero angular momentum:

\begin{itemize}
\item The moduli curve $\bar{\gamma}(t)$ of a 3-body motion is determined by
the associated shape curve $\gamma^{\ast}(t)$ on the 2-sphere.

\item The unique parametrization theorem asserts that the time parametrized
curve $\bar{\gamma}(t)$ is (essentially) determined by the oriented geometric
shape curve $\gamma^{\ast}$. In turn, the latter is uniquely determined by a
few curvature invariants representing the relative geometry between
$\gamma^{\ast}$ and the gradient flow of the potential function $U^{\ast}$ on
the 2-sphere at a
\ \ \ \ \ \ \ \ \ \ \ \ \ \ \ \ \ \ \ \ \ \ \ \ \ \ \ \ \ \ \ \ \ \ \ \ \ \ \ \ \ \ \ \ \ \ \ \ \ \ \ \ \ \ \ \ \ \ \ \ \ \ \ \ \ \ \ \ \ \ \ \ \ \ \ \ \ \ \ \ \ \ \ \ \ \ \ \ \ \ \ \ \ \ \ \ \ \ \ \ \ \ \ \ \ \ \ \ \ \ \ \ \ \ \ \ \ \ \ \ \ \ \ \ \ \ \ \ \ \ \ \ \ \ \ \ \ \ \ \ \ \ \ \ \ \ \ \ \ \ \ \ \ \ \ \ \ \ \ \ \ \ \ \ \ \ \ \ \ \ \ \ \ \ \ \ \ \ \ \ \ \ \ \ \ \ \ \ \ \ \ \ \ \ \ \ \ \ \ \ \ \ \ \ \ \ \ \ \ \ \ \ \ \ \ \ \ \ \ \ \ \ \ \ \ \ \ \ \ \ \ \ \ \ \ \ \ \ \ \ \ \ \ \ \ \ \ \ \ \ \ \ \ \ \ \ \ \ \ \ \ \ \ \ \ \ \ \ \ \ \ \ \ \ \ \ \ \ \ \ \ \ \ \ \ \ \ \ \ \ \ \ \ \ \ \ \ \ \ \ \ \ \ \ \ \ \ \ \ \ \ \ \ \ \ \ \ \ \ \ \ \ \ \ \ \ \ \ \ \ \ \ \ \ \ \ \ \ \ \ \ \ \ \ \ \ \ \ \ \ \ \ \ \ \ \ \ \ \ \ \ \ \ \ \ \ \ \ \ \ \ \ \ \ \ \ \ \ \ \ \ \ \ \ \ \ \ \ \ \ \ \ \ \ \ \ \ \ \ \ \ \ \ \ \ \ \ \ \ \ \ \ \ \ \ \ \ \ \ \ \ \ \ \ \ \ \ \ \ \ \ \ \ \ \ \ \ \ \ \ \ \ \ \ \ \ \ \ \ \ \ \ \ \ \ \ \ \ \ \ \ \ \ \ \ \ \ \ \ \ \ \ \ \ \ \ \ \ \ \ \ \ \ \ \ \ \ \ \ \ \ \ \ \ \ \ \ \ \ \ \ \ \ \ \ \ \ \ \ \ \ \ \ \ \ \ \ \ \ \ \ \ \ \ \ \ \ \ \ \ \ \ \ \ \ \ \ \ \ \ \ \ \ \ \ \ \ \ \ \ \ \ \ \ \ \ \ \ \ \ \ \ \ \ \ \ \ \ \ \ \ \ \ \ \ \ \ \ \ \ \ \ \ \ generic
point.

\item A remarkable property of the shape curve on the 2-sphere is expressed by
The Monotonicity Theorem, describing the piecewise monotonic behavior of its
(mass-modified) latitude between two succeeding local maxima or minima, which
must lie on different hemispheres. \ 
\end{itemize}

The purpose of this paper is to establish the first two of the above stated
properties to planar motions in general. On the other hand, the monotonicity
theorem is no longer valid.

We start with a description in the next two subsections of the geometric
reduction procedure, following the basic setting from \cite{HS-2}, and a
summary with the major results, Theorem A and Theorem B, is presented in
Section 1.3. In Section 2 we present the reduced Newtonian ODE system, at the
level of the moduli space $\bar{M}\simeq\mathbb{R}^{3}$. We shall also point
out the subtle distinction between this system of differential equations and
the geodesic equations of the dynamical Riemannian metric on $\bar{M}$
associated with a possible reduction applied to Jacobi's geometrization
approach. The two systems are identical if and only if the angular momentum
vanishes. But they are distinguished by some terms depending linearly on the
angular momentum, representing the effect of a ficticious "Coriolis force" as
if the sphere $M^{\ast}$ is rotating.

In Section 3 we show how a few geometric invariants of the shape curve
$\gamma^{\ast}$, in general at a single regular point, provide enough
information to determine the initial data for the moduli curve $\bar{\gamma
}(t)$ as the unique solution of the reduced ODE system. This also completes
the proofs of theorems stated in Section 1.3.

\ 

\subsection{The basic kinematic quantities and the potential function}

The classical 3-body problem in celestial mechanics studies the local and
global geometry of the trajectories of a 3-body system, namely the motion of
three point masses (bodies) of mass $m_{i}>0,i=1,2,3$, under the influence of
the mutual gravitational forces. This system constitutes a conservative
mechanical system with the Newtonian potential function
\begin{equation}
U=\sum_{i<j}\frac{m_{i}m_{j}}{r_{ij}},\text{ \ \ \ }r_{ij}=\left\vert
\mathbf{a}_{i}-\mathbf{a}_{j}\right\vert \text{, }\mathbf{a}_{i}\in
\mathbb{R}^{3}, \label{1.1}%
\end{equation}
and potential energy $-U$. We introduce the notion of an \emph{m-triangle,}
which we shall identify with the vector $\delta=(\mathbf{a}_{1},\mathbf{a}%
_{2},\mathbf{a}_{3})\in$ $\mathbb{R}^{3\times3}$ which records the position of
the system in a barycentric inertial frame, namely we also assume $\sum
m_{i}\mathbf{a}_{i}=0$.

A \emph{trajectory} is a time parametrized curve $\gamma(t)=(\mathbf{a}%
_{1}(t),\mathbf{a}_{2}(t),\mathbf{a}_{3}(t))\mathbf{\ }$representing a motion
of the 3-body system, locally characterized by Newton's equation%
\begin{equation}
\frac{d^{2}}{dt^{2}}\gamma=\nabla U(\gamma)=(\frac{1}{m_{1}}\frac{\partial
U}{\partial\mathbf{a}_{1}},\frac{1}{m_{2}}\frac{\partial U}{\partial
\mathbf{a}_{2}},\frac{1}{m_{3}}\frac{\partial U}{\partial\mathbf{a}_{3}%
}\text{\ })\text{\ } \label{1.2}%
\end{equation}
However, the trajectories can also be characterized globally as solutions of a
suitable boundary value problem, namely as extremals of an appropriate least
action principle, such as the two principles due to Lagrange and Hamilton.

Let us also recall the basic kinematic quantities which are the (polar) moment
of inertia, kinetic energy and angular momentum, respectively defined by
\begin{equation}
I=\sum m_{i}\left\vert \mathbf{a}_{i}\right\vert ^{2}\text{, \ }T=\frac{1}%
{2}\sum m_{i}\left\vert \mathbf{\dot{a}}_{i}\right\vert ^{2}\text{,
\ \ \ }\mathbf{\Omega}=\sum m_{i}(\mathbf{a}_{i}\times\mathbf{\dot{a}}_{i})
\label{1.3}%
\end{equation}
The dynamics of the 3-body problem is largely expressed by their interactions
with the potential function $U$ \ via the equation (\ref{1.2}), and the
invariance of the total energy
\begin{equation}
h=T-U \label{h}%
\end{equation}
is a simple consequence of (\ref{1.2}) and the definition of $T$, whereas the
invariance of the vector $\mathbf{\Omega}$ also follows from (\ref{1.2}), but
is essentially due to the rotational symmetry of $U$.

\subsection{Reduction to the moduli space and the shape space}

In this article we shall be concerned with planar three-body motions, namely
the individual position vectors $\mathbf{a}_{i}$ are confined to a fixed plane
$\mathbb{R}^{2}$ and hence the trajectory $\gamma(t)$ is a curve in the
\emph{configuration space}
\begin{equation}
M\simeq\mathbb{R}^{4}:\sum\limits_{i=1}^{3}m_{i}\mathbf{a}_{i}=0\text{,
\ }\mathbf{a}_{i}\in\mathbb{R}^{2} \label{M}%
\end{equation}
We assume the plane $\mathbb{R}^{2}$ is positively oriented by the unit normal
vector $\mathbf{k}$, and the angular momentum of a trajectory $\gamma(t)$ is
written as
\begin{equation}
\mathbf{\Omega}=\omega\mathbf{k}\text{,} \label{angmom}%
\end{equation}
so that the scalar angular momentum $\omega\in\mathbb{R}$ is a constant of the
motion. Thus, each trajectory $\gamma(t)$ belongs to a specific
energy-momentum level $(h,\omega)$. Consequently, due to the conservation of
energy and angular momentum, the Newtonian system (\ref{1.2}) for planar
motions reduces to a system of differential equations of total order $8-2=6$.

We assume $M$ has the Euclidean \emph{kinematic metric }$M$, with the inner
product of m-triangles $\delta=(\mathbf{a}_{1},\mathbf{a}_{2},\mathbf{a}%
_{3}),$ $\delta^{\prime}=(\mathbf{b}_{1},\mathbf{b}_{2},\mathbf{b}_{3})$
defined by
\begin{equation}
\delta\cdot\delta^{\prime}=\sum m_{i}\mathbf{a}_{i}\cdot\mathbf{b}_{i},
\label{metric}%
\end{equation}
and then the right side of Newton's equation (\ref{1.2}) is the gradient field
$\nabla U$. Moreover, the squared norm is the moment of inertia,
$I=I(\delta)=\left\vert \delta\right\vert ^{2}$, and the \emph{hyperradius
}$\rho=\sqrt{I}$\ is the natural size function\emph{ }which also measures the
distance from the origin.\emph{\ }

The linear group $SO(2)$ acts orthogonally on $M$ by rotating m-triangles, and
the orbit space of $M$ and its unit sphere $M^{1}\ $
\begin{equation}
\bar{M}=M/SO(2)\text{, \ }M^{\ast}=M^{1}/SO(2) \label{orbitspace}%
\end{equation}
are the (congruence) \emph{moduli space }and the \emph{shape space},
respectively. The points in $\bar{M}$ represent \emph{congruence classes}
$\bar{\delta}$ of m-triangles $\delta$, and points in $M^{\ast}$ represent the
\emph{shapes} (or similarity classes) $\delta^{\ast}$ of m-triangles
$\delta\neq0$.

Next, we recall that the above orbit spaces and related orbit maps is actually
the classical Hopf map construction $\mathfrak{h}:\mathbb{R}^{4}%
\rightarrow\mathbb{R}^{3}$ in disguise, which is illustrated by the following diagram%

\begin{equation}%
\begin{array}
[c]{ccccccc}%
M & \simeq & \mathbb{R}^{4} & ^{\mathfrak{h}}\rightarrow & \mathbb{R}^{3} &
\simeq & \bar{M}\\
\cup &  & \cup &  & \cup &  & \cup\\
M^{1} & \simeq & S^{3} & ^{\mathfrak{h}}\rightarrow & S^{2} & \simeq &
M^{\ast}%
\end{array}
\label{diagram}%
\end{equation}
where $M\simeq\mathbb{R}^{4}=$ $\mathbb{R}^{2\times2}$ is a chosen
$SO(2)$-equivariant isometry between $M$ and the matrix space $\mathbb{R}%
^{2\times2}$, whose column vectors $\mathbf{x}_{1},\mathbf{x}_{2}$ provides a
choice of coordinates for m-triangles and are also referred to as Jacobi
vectors in the literature. In the sequel we shall identify the pair $(\bar
{M},M^{\ast})$ with $(\mathbb{R}^{3},S^{2})$.

Similar to the pair $(M,M^{1})$, $\bar{M}$ is a cone over $M^{\ast}$ with
$\rho$ as the radial coordinate, and the sphere $M^{\ast}$ is the subset
$(\rho=1)$. These spaces have the naturally induced \emph{orbital distance
metric}, making the Hopf map a Riemannian submersion. With this geometry
$M^{\ast}=S^{2}(1/2)$ is the round sphere of radius $1/2$, namely with the
metric
\begin{equation}
d\sigma^{2}=\frac{1}{4}(d\varphi^{2}+\sin^{2}(\varphi)d\theta^{2}),
\label{MstarMetric}%
\end{equation}
in terms of spherical polar coordinates $(\varphi,\theta)$ on $S^{2}$, whereas
$(\bar{M},d\bar{s}^{2})$ is best understood as a Riemannian cone over
$M^{\ast}$:
\begin{equation}
\bar{M}=C(M^{\ast}):d\bar{s}^{2}=d\rho^{2}+\frac{\rho^{2}}{4}(d\varphi
^{2}+\sin^{2}(\varphi)d\theta^{2}) \label{MbarMetric}%
\end{equation}
Note, however, the representation of the various shapes of m-triangles on a
fixed model sphere $S^{2}$, with a distinguished equator circle representing
collinear shapes, depends on some choice of conventions together with the mass
distribution $\left\{  m_{i}\right\}  $, via the mass dependence of the Jacobi vectors.

Briefly, in this article we shall focus on the two-step reduction%
\begin{equation}
M\rightarrow\bar{M}\text{, \ }\bar{M}-\left\{  0\right\}  \text{\ }\rightarrow
M^{\ast}\text{, \ \ \ }\gamma(t)\rightarrow\bar{\gamma}(t)\text{\ }%
\rightarrow\gamma^{\ast}(t)
\end{equation}
by which a trajectory $\gamma(t)$ of a planary 3-body motion is projected to
its moduli curve $\bar{\gamma}(t)$ and further to its shape curve
$\gamma^{\ast}(t)$ on the 2-sphere. In Section 2.1 we shall put the above
reduction in the framework of Riemannian geometry, and the relative geometry
of the shape curve and the gradient flow of the function $U^{\ast}$, the
restriction of $U$ to the sphere, will be our primary concern. For basic
information on this geometric reduction approach we refer to \cite{HS-1},
\cite{HS-2}.

\subsection{A summary of the main results}

The Hopf map construction (\ref{diagram}) makes it convenient to use a
Euclidean model, $\bar{M}$ $=\mathbb{R}^{3}$, for the moduli space with the
unit sphere $S^{2}(1)$ as the shape space $M^{\ast}.$ In this way one can
express all kinematic quantities and dynamical equations in terms of the usual
spherical geometry, and hence take the full advantage of the cone structure of
$\bar{M}$ over $M^{\ast}$ using the coordinates $(\rho,\varphi,\theta)$ on
$\bar{M}$, $0\leq\varphi\leq\pi,0\leq\theta\leq2\pi.$

Besides the $SO(2)$-invariance of the Newtonian potential function $U$,
another crucial property of $U$ that we shall exploit is its
\emph{homogeneity}, namely it is of type
\begin{equation}
U=\frac{U^{\ast}(\varphi,\theta)}{\rho} \label{S2}%
\end{equation}
For a given curve $\gamma(t)$ in $M$, the two curves%

\begin{equation}
\gamma^{\ast}(t)=(\varphi(t),\theta(t))\text{, \ \ }\bar{\gamma}%
(t)=(\rho(t),\gamma^{\ast}(t)) \label{S1}%
\end{equation}
are the associated shape and moduli curve, respectively. Let us consider
trajectories $\gamma(t)$ of Newton's equation (\ref{1.2}) at a given
energy-momentum level $(h,\omega)$. In Section 2.2 we shall focus on the
\emph{reduced Newton's equations} in $\bar{M}$ in the coordinates
$(\rho,\varphi,\theta)$, see (\ref{NewtonRedu}), which by spherical geometry
can be presented as the pair
\begin{align}
0  &  =\ddot{\rho}+\frac{\dot{\rho}^{2}}{\rho}-\frac{1}{\rho}(\frac{1}{\rho
}U^{\ast}+2h)\text{ \ }\nonumber\\
0  &  =\text{\ }\ddot{\gamma}^{\ast}+P\dot{\gamma}^{\ast}+Q\mathbf{\nu}^{\ast
}+R\nabla U^{\ast}\text{\ \ \ } \label{S5}%
\end{align}
Here the first equation in (\ref{S5}) is the so-called \emph{Lagrange-Jacobi}
equation, which in terms of the moment of inertia $I=\rho^{2}$ expresses as%
\[
\ddot{I}=2U+4h.
\]
The second equation is a vector equation on the unit 2-sphere, expressing the
covariant acceleration $\ddot{\gamma}^{\ast}$ of the shape curve as a sum of
three "forces", $\mathbf{\nu}^{\ast}$ is the oriented unit normal of the
curve, and $\nabla U^{\ast}$ is the gradient field of $U^{\ast}$. The three
coefficients can be expressed as
\begin{equation}
P=2\frac{\dot{\rho}}{\rho},\text{ }Q=\frac{2\omega v}{\rho^{2}},\text{
\ \ }R=-\frac{4}{\rho^{3}},
\end{equation}
where $v=\left\vert \dot{\gamma}^{\ast}\right\vert $ is the speed of the shape
curve. \ \ \ \ \ \ \ \ \ \ \ \ \ \ \ \ \ \ \ \ \ \ \ \ \ \ \ \ \ \ \ \ \ \ \ \ \ \ \ \ \ \ \ \ \ \ \ \ \ \ \ \ \ \ \ \ \ \ \ 

The scalar version of (\ref{S5}) is stated as the system (\ref{NewtonRedu}).
We may imagine the component $Q\mathbf{\nu}^{\ast}$ in (\ref{S5}) to be the
\emph{Coriolis} \ "force" caused by some fictitious rotation of the shape
sphere, and we note that its magnitude is proportional to the speed as well as
the angular momentum. As an ODE system, (\ref{S5}) should be augment
(\ref{S5}) with the energy integral (\ref{h}), viewed as a first order
equation in $\bar{M}$%
\begin{equation}
\frac{1}{2}\dot{\rho}^{2}+\frac{\rho^{2}}{8}v^{2}+\frac{\omega^{2}}{2\rho^{2}%
}-\frac{U^{\ast}}{\rho}=h \label{S7a}%
\end{equation}
where $h$ is a constant, called the energy level. In fact, combined with
(\ref{S7a} ) any of the three scalar equations in (\ref{NewtonRedu}) can be
derived from the other two, so the total order of the system is actually $5$.

As a curve on the 2-sphere, the basic geometric invariant of the shape curve
$\gamma^{\ast}$ is its geodesic curvature function $K^{\ast}=K^{\ast}(s)$,
where $s$ is the arc-length. In general, crucial information of the 3-body
motion is encoded into this function, and our\ problem is rather to detect the
code and extract the hidden information in an appropriate way. \ \ \ \ \ \ \ \ \ \ \ \ \ \ \ \ \ \ \ \ \ \ \ \ \ \ \ \ \ \ \ \ \ \ \ \ \ \ \ \ \ \ \ \ \ \ \ \ \ \ \ \ \ \ \ \ \ \ \ \ \ \ \ \ \ \ \ \ \ \ \ \ \ \ \ \ \ \ \ \ \ \ \ \ \ \ \ \ \ \ \ \ \ \ \ \ \ \ \ \ \ \ \ \ \ \ \ \ \ \ \ \ \ \ \ \ \ \ \ \ \ \ \ \ \ \ \ \ \ \ \ \ \ \ \ \ \ \ \ \ \ \ \ \ \ \ \ \ \ \ \ \ \ \ \ \ \ \ \ \ \ \ \ \ \ \ \ \ \ \ \ \ \ \ \ \ \ \ \ \ \ \ \ \ \ \ \ \ \ \ \ \ \ \ \ \ \ \ \ \ \ \ \ \ \ \ \ \ \ \ \ \ \ \ \ \ \ \ \ \ \ \ \ \ \ \ \ \ \ \ \ \ \ \ \ \ \ \ \ \ \ \ \ \ \ \ \ \ \ \ \ \ \ \ \ \ \ \ \ \ \ \ \ \ \ \ \ \ \ \ \ \ \ \ \ \ \ \ \ \ \ \ \ \ \ \ \ \ \ \ \ \ \ \ \ \ \ \ \ \ \ \ \ \ \ \ \ \ \ \ \ \ \ \ \ \ \ \ \ \ \ \ \ \ \ \ \ \ \ \ \ \ \ \ \ \ \ \ \ \ \ \ \ \ \ \ \ \ \ \ \ \ \ \ \ \ \ \ \ \ \ \ \ \ \ \ \ \ \ \ \ \ \ \ \ \ \ \ \ \ \ \ \ \ \ \ \ \ \ \ \ \ \ \ \ \ \ \ \ \ \ \ \ \ \ \ \ \ \ \ \ \ \ \ 

Let $U_{\mathbf{\nu}}^{\ast}$ denote the\ normal derivative of the function
$U^{\ast}$ along the curve $\gamma^{\ast}$. We shall derive the following
formula for the curvature
\begin{equation}
K^{\ast}=4\frac{U_{\mathbf{\nu}}^{\ast}}{\rho^{3}v^{2}}-\frac{2\omega}%
{\rho^{2}v} \label{K*}%
\end{equation}
This identity is the key to the understanding of how the relative geometry
between $\gamma^{\ast}$ and the gradient flow of $U^{\ast}$, in fact, provides
the data for the initial value problem of the ODE (\ref{S5}) and thus
determine the moduli curve \ $\bar{\gamma}(t)$. This is the major issue we
shall be addressing, and the main results can be formulated neatly as the
following two theorems :

\textbf{Theorem A \ }\emph{For a given total energy and angular momentum, a
planar three-body motion is completely determined up to congruence by its time
parametrized shape curve (which records only the changing of \ the similarity
class).}\ \ \ 

\textbf{Theorem B}\emph{ (Elimination of time) The time parametrized shape
curve is determined by the oriented geometric (i.e. non-parametrized) shape
curve. }\ \ \ \ \ \ \ \ \ \ \ \ \ \ \ \ \ \ \ \ \ \ \ 

\begin{remark}
(i) The main purpose of the present paper is to give the complete proofs of
the two theorems.

(ii) The theorems and their proof do not apply to the case of exceptional
shape curves (as defined in Section 2.4.4). Uniqueness of time parametrization
means, of course, unique modulo time translation. \ 

(iii) The simpler case of vanishing angular momentum $(\omega=0)$ was treated
in the paper \cite{HS-2}, where the same two theorems are proved, with special
attention to the case of $(h,\omega)=(0,0)$.

(iv) Planar 3-body motions were also investigated in \cite{HS-3}, attempting
to generalize results from \cite{HS-2}. However, in \cite{HS-3} the proof of
the result corresponding to the above two theorems is incorrect when
$\omega\neq0$, since the Coriolis term $Q\mathbf{\nu}^{\ast}$ in (\ref{S5})
was missing; we refer to the discussion in Section 2.1 and 2.2.
\end{remark}

\section{Geometric reduction}

\subsection{Riemannian structures on the moduli space}

In his famous lectures \cite{Jac}, Jacobi introduced the concept of a
\emph{kinematic metric }$ds^{2}$ on the configuration space $M$ of a
mechanical system with kinetic energy $T.$ For example, in the case of an
n-body system with total mass $\sum m_{i}=1,$\emph{\ }
\begin{equation}
ds^{2}=2Tdt^{2}=\sum\limits_{i}m_{i}(dx_{i}^{2}+dy_{i}^{2}+dz_{i}^{2})
\label{2.1}%
\end{equation}
which is clearly equivalent to the definition (\ref{metric}). Now, for a
system with potential energy $-U$ and a fixed total energy $h$, set
\begin{align}
M_{h}  &  =\left\{  p\in M;h+U(p)\geq0\right\} \label{2.3}\\
ds_{h}^{2}  &  =(h+U)ds^{2}\nonumber
\end{align}
where $ds_{h}^{2}$ will be referred to as the \emph{dynamical metric} on
$M_{h}$. By writing
\[
ds_{h}=\sqrt{h+U}ds=\sqrt{T}ds=\sqrt{2}Tdt
\]
Jacobi transformed Lagrange's action integral (on the left side of
(\ref{2.4})) into an arc-length integral, namely
\begin{equation}
J(\gamma)=\int_{\gamma}Tdt=\frac{1}{\sqrt{2}}\int_{\gamma}ds_{h} \label{2.4}%
\end{equation}
and hence the least action principle becomes the following simple geometric
statement :
\begin{align}
&  \text{\emph{Trajectories} \emph{with total energy} }\emph{h}\text{
\emph{are along the} \emph{geodesic curves}}\label{state1}\\
\text{ }  &  \text{\emph{in the space} }M_{h}\text{ \emph{with the dynamical
metric} }ds_{h}^{2\text{ }}.\nonumber
\end{align}

Nowadays, the metric spaces $(M,ds^{2}),(M_{h},ds_{h}^{2\text{ }})$ are called
Riemannian manifolds, and the dynamical metric is a conformal modification of
the kinematic metric by the scaling factor $(h+U)$. In general, and as
exemplified by (\ref{2.1}), a Riemannian metric on a manifold $N$ amounts to
the choice of a kinetic energy function on the tangent bundle,
$T:TN\rightarrow\mathbb{R}$, which is a positive definite quadratic form on
each tangent plane $T_{p}N$. This allows us to define the speed $\left\vert
\frac{d\Gamma}{dt}\right\vert $ along a time parametrized curve $\Gamma(t)$ in
$N$ and hence an arc-length function $u(t)$ along the curve by
\begin{equation}
T(\frac{d\Gamma}{dt})=\frac{1}{2}\left\vert \frac{d\Gamma}{dt}\right\vert
^{2}=\frac{1}{2}(\frac{du}{dt})^{2} \label{2.7}%
\end{equation}

Next, we would like to inquire further into the above Lagrange-Jacobi approach
to dynamics, to check whether the dynamics in $M$ and characterization
(\ref{state1}) of the trajectories can be pushed down to $\bar{M}$, with a
similar geometric description of the moduli curves of 3-body motions on a
fixed energy-momentum level $(h,\omega)$.

First of all, $\bar{M}$ already has the orbital distance metric $d\bar{s}^{2}$
as in (\ref{MbarMetric}) and hence a corresponding notion of kinetic energy
$\bar{T}$ as indicated in (\ref{2.7}), namely
\begin{equation}
d\bar{s}^{2}=2\bar{T}dt^{2}=d\rho^{2}+\rho^{2}d\sigma^{2}=d\rho^{2}+\frac
{\rho^{2}}{4}(d\varphi^{2}+\sin^{2}(\varphi)d\theta^{2})\text{ } \label{Tbar}%
\end{equation}
On the other hand, for a curve $\gamma(t)$ in $M$ there is the orthogonal
decomposition $\dot{\gamma}=\dot{\gamma}^{h}+\dot{\gamma}^{\omega}$ of its
velocity, and the corresponding splitting of kinetic energy%
\begin{equation}
T=\frac{1}{2}|\dot{\gamma}^{h}|^{2}+\frac{1}{2}|\dot{\gamma}^{\omega}%
|^{2}=T^{h}+T^{\omega}, \label{2.10}%
\end{equation}
where $\dot{\gamma}^{\omega}$ is tangential to the $SO(2)$-orbit. Hence,
$T^{\omega}$ is the kinetic energy due to purely rotational motion of the
m-triangle, and for planar 3-body motions this energy term can be expressed
as
\begin{equation}
T^{\omega}=\frac{\omega^{2}}{2\rho^{2}}. \label{Tomega}%
\end{equation}

By definition of the metric $d\bar{s}^{2}$, the orbit map $M\rightarrow$
$\bar{M}$ is a Riemannian submersion and hence maps the "horizontal" velocity
$\dot{\gamma}^{h}$ of $\dot{\gamma}$ isometrically to the velocity vector of
$\bar{\gamma}$. This shows $T^{h}=\bar{T}$ is naturally the kinetic energy at
the level of $\bar{M}$, that is, the kinematic metric $d\breve{s}^{2}$ on
$\bar{M}$ naturally identifies with the orbital distance metric. Therefore, by
(\ref{2.10}) and (\ref{Tomega}), the kinematic metric on $\bar{M}$ can be
finally expressed as in (\ref{Tbar}):
\begin{align}
d\breve{s}^{2}  &  =2T^{h}dt^{2}=2(T-T^{\omega})dt^{2}=2(T-\frac{\omega^{2}%
}{2\rho^{2}})dt^{2}\label{2.11}\\
&  =d\bar{s}^{2}=2\bar{T}dt^{2}=d\rho^{2}+\frac{\rho^{2}}{4}(d\varphi^{2}%
+\sin^{2}\varphi d\theta^{2})\nonumber
\end{align}

Now, in the spirit of Lagrange-Jacobi, let us turn to the description of
$\bar{M}$ as the configuration space of a simple classical mechanical system,
with the kinetic energy $\bar{T}$ of the kinematic metric as above, and
\emph{effective} \emph{potential} energy $-\bar{U}$ defined so that the system
is conservative, namely by setting
\begin{equation}
\bar{U}=U-\frac{\omega^{2}}{2\rho^{2}}\text{, \ \ }\bar{T}-\bar{U}%
=T-U=h\text{\ } \label{U*red}%
\end{equation}
Then the Lagrange function is $\bar{L}=\bar{T}+\bar{U}$, and according to
Lagrange's least action principle the trajectories of the mechanical system
should be the solutions of Lagrange's equations%

\begin{equation}
\frac{d}{dt}(\frac{\partial\bar{L}}{\partial\dot{\rho}})=\frac{\partial\bar
{L}}{\partial\rho}\text{, \ \ }\frac{d}{dt}(\frac{\partial\bar{L}}%
{\partial\dot{\varphi}})=\frac{\partial\bar{L}}{\partial\varphi},\text{
\ }\frac{d}{dt}(\frac{\partial\bar{L}}{\partial\dot{\theta}})=\frac
{\partial\bar{L}}{\partial\theta} \label{L-eq}%
\end{equation}
\qquad

Alternatively, Lagrange's approach may well be modified according to Jacobi's
geometrization idea, leading to the dynamical Riemannian metric $d\bar
{s}_{h,\omega}^{2}:$
\begin{align}
\bar{J}(\bar{\gamma})  &  =\sqrt{2}\int_{\bar{\gamma}}\bar{T}dt=\sqrt{2}%
\int_{\bar{\gamma}}(\bar{U}+h)dt=\int_{\bar{\gamma}}\sqrt{\bar{U}+h}d\bar
{s}=\int d\bar{s}_{h,\omega}\text{ ,}\nonumber\\
d\bar{s}_{h,\omega}^{2}  &  =\bar{T}d\bar{s}^{2}=(\bar{U}+h)d\bar{s}%
^{2}=(U+h-\frac{\omega^{2}}{2\rho^{2}})d\bar{s}^{2} \label{2.12}%
\end{align}

In summary, the subregion $\bar{M}_{h,\omega}$ $\subset\bar{M}$ defined by
$\bar{U}+h\geq0$ can be regarded as a classical mechanical system, with
kinematic metric $d\bar{s}^{2}=2\bar{T}dt^{2}$, potential function $\bar{U}$,
and energy conservation $h=$ $\bar{T}-\bar{U}$. But it is also a Riemannian
manifold with the dynamical metric $d\bar{s}_{h,\omega}^{2}=(\bar{U}%
+h)d\bar{s}^{2}$. Thus we arrive at the following geometric statement similar
to (\ref{state1}) :%
\begin{align}
&  \text{\emph{Trajectories of the simple mechanical system }\ on }\bar
{M}_{h,\omega}\text{ \emph{are the} \emph{solutions }}\nonumber\\
&  \text{\emph{of the Lagrange equations} (\ref{L-eq})\emph{, and the curves
coincide with} }\label{State2}\\
&  \text{\emph{the geodesic curves of the} \emph{dynamical metric} }d\bar
{s}_{h,\omega}^{2}\text{\emph{.}}\nonumber
\end{align}
In terms of the coordinates $(\rho,\varphi,\theta)$ on $\bar{M}$, both ways of
calculating the associated differential equations lead to the following ODE system%

\begin{align}
(i)\text{ \ }0  &  =\ddot{\rho}+\frac{\dot{\rho}^{2}}{\rho}-\frac{1}{\rho
}(\frac{1}{\rho}U^{\ast}+2h)\nonumber\\
(ii)\text{ \ }0  &  =\text{\ }\ddot{\varphi}+2\frac{\dot{\rho}}{\rho}%
\dot{\varphi}-\frac{1}{2}\sin(2\varphi)\dot{\theta}^{2}-\frac{4}{\rho^{3}%
}U_{\varphi}^{\ast}\label{fake}\\
(iii)\text{ \ \ }0  &  =\ddot{\theta}+2\frac{\dot{\rho}}{\rho}\dot{\theta
}+2\cot(\varphi)\dot{\varphi}\dot{\theta}-\frac{4}{\rho^{3}}\frac{1}{\sin
^{2}\varphi}U_{\theta}^{\ast}\nonumber
\end{align}
where $U_{\varphi}^{\ast}=\frac{\partial}{\partial\varphi}U^{\ast}$ etc. , and
the following first integral is the energy conservation law
\begin{equation}
(iv)\text{ \ }h=\bar{T}-\bar{U}=\frac{1}{2}\dot{\rho}^{2}+\frac{\rho^{2}}%
{8}[\dot{\varphi}^{2}+(\sin^{2}\varphi)\dot{\theta}^{2}]+\frac{\omega^{2}%
}{2\rho^{2}}-\frac{U^{\ast}}{\rho} \label{h-int}%
\end{equation}
Therefore, the system (i)-(iv) of differential equations has total order $5$.

\begin{remark}
The above system was derived in the same way in \cite{HS-2} for the special
case of vanishing angular momentum $(\omega=0)$, and then the equations
actually yield the moduli curves of the 3-body motions. However, this fails
when $\omega\neq0$, namely the ordinary Lagrange-Jacobi approach does not
yield the correct reduced Newton's equations (\ref{NewtonRedu}) at the level
of $\bar{M}$. The subtle difference between the latter and the above system
(\ref{fake}) is conspicuous by direct comparison, namely the Coriolis term is
missing in the system (\ref{fake}).
\end{remark}

\subsection{The reduced Newton's equations on the moduli space}

The induced Newton's equations on the moduli space $\bar{M}\simeq
\mathbb{R}^{3}$, for given values of $(h,\omega)$, can be expressed in
spherical coordinates $(\rho,\varphi,\theta)$ as the following ODE system:
\begin{align}
(i)\text{ \ }0  &  =\ddot{\rho}+\frac{\dot{\rho}^{2}}{\rho}-\frac{1}{\rho
}(\frac{1}{\rho}U^{\ast}+2h)\nonumber\\
(ii)\text{ \ }0  &  =\text{\ }\ddot{\varphi}+2\frac{\dot{\rho}}{\rho}%
\dot{\varphi}-\frac{1}{2}\sin(2\varphi)\dot{\theta}^{2}-2\omega\frac
{\sin\varphi}{\rho^{2}}\dot{\theta}-\frac{4}{\rho^{3}}U_{\varphi}^{\ast
}\label{NewtonRedu}\\
(iii)\text{ \ \ }0  &  =\ddot{\theta}+2\frac{\dot{\rho}}{\rho}\dot{\theta
}+2\cot(\varphi)\dot{\varphi}\dot{\theta}+2\omega\frac{1}{\rho^{2}\sin\varphi
}\dot{\varphi}-\frac{4}{\rho^{3}}\frac{1}{\sin^{2}\varphi}U_{\theta}^{\ast
}\nonumber
\end{align}
which also has the first integral (iv) stated in (\ref{h-int}), namely the
conservation of energy. Again, the total order of the system (i)-(iv) is 5.

\begin{remark}
These equations are valid for any shape potential function $U^{\ast}$, and the
results in this paper do not depend on specific properties of $U^{\ast}$. For
the Newtonian case there is the explicit expression (\ref{U*}) below.
\end{remark}

Clearly, for $\omega=0$ the two systems (\ref{fake}) and (\ref{NewtonRedu})
coincide, and the choice of spherical polar coordinates $(\varphi,\theta)$ on
the shape sphere $M^{\ast}=S^{2}$ is immaterial. However, for $\omega\neq0$
the explicit form of the Coriolis term in equation (ii) and (iii) depends on
the polar coordinates $(\varphi,\theta)$ to be centered at the north pole
$(\varphi=0)$, and then the equator circle $(\varphi=\pi/2)$ is the locus
representing the eclipse shapes. On this circle there are three distinguished
points $b_{i}$, of longitude angle $\theta_{i},i=1,2,3$, representing the
binary collisions, say $\theta_{1}$ represents the collision of point masses
$m_{2}$ and $m_{3}$ etc. Choosing the zero meridian to be $\theta_{1}=0$ say,
we shall assume positive direction of $\theta$ so that%
\[
(\theta_{1},\theta_{2},\theta_{3})=(0,\beta_{3},-\beta_{2}),\text{ \ }%
\cos\beta_{i}=\frac{\mu_{j}\mu_{k}-\mu_{i}}{(1-\mu_{j})(1-\mu_{k})}\text{ \ ,
cf. (133) in [4]}%
\]
where the angle $\beta_{i}$ is the longitude distance between $b_{j}$ and
$b_{k}$, for different $i,j,k$, and we have introduced the normalized masses
\[
\mu_{i}=m_{i}/\bar{m};\text{ }\bar{m}=\sum m_{i}%
\]
For convenience, the Newtonian shape potential function $U^{\ast}$ can be
expressed as (cf. (169) in [4])
\begin{equation}
U^{\ast}(\varphi,\theta)=\bar{m}^{5/2}\sum_{i=1}^{3}\frac{(\mu_{j}\mu
_{k})^{3/2}(\mu_{i}^{\ast})^{-1/2}}{\sqrt{(1-\sin\varphi\cos(\theta-\theta
_{i})}}\text{, \ }\mu_{i}^{\ast}=\frac{1}{2}(1-\mu_{i})\text{.} \label{U*}%
\end{equation}
For example, in the special case of $m_{1}=m_{2}=m_{3}=1$ we obtain
\begin{equation}
U^{\ast}(\varphi,\theta)=\sum_{i=1}^{3}\frac{1}{\sqrt{(1-\sin\varphi
\cos(\theta-\theta_{i})}} \label{U*1}%
\end{equation}

\bigskip For a derivation of the above ODE system, we shall recall the proof
given in Sydnes[2013], adapted to the planar case. It is based upon singular
value decomposition of $M$ as the space of matrices $X=[\mathbf{x}%
_{1}|\mathbf{x}_{2}]$ with column (Jacobi) vectors $\mathbf{x}_{i},$
\[
\Phi:\mathfrak{S}=SO(2)\times D\times SO(2)^{\prime}\rightarrow\mathbb{R}%
^{2\times2}\simeq M\text{,}%
\]
$D\simeq\mathbb{R}^{2}$ consists of diagonal matrices $diag(r_{1},r_{2})$, and
$\Phi$ is the surjective map given by matrix multiplication and is locally an
analytic diffeomorphism at generic points $(P,R,Q)\in\mathfrak{S}$. With the
following parametrization
\begin{align*}
P  &  =\left(
\begin{array}
[c]{cc}%
\cos\alpha & -\sin\alpha\\
\sin\alpha & \cos\alpha
\end{array}
\right)  ,\text{ }R=\rho\left(
\begin{array}
[c]{cc}%
\sin(\frac{\varphi}{2}+\frac{\pi}{4}) & 0\\
0 & \cos(\frac{\varphi}{2}+\frac{\pi}{4})
\end{array}
\right)  ,\text{ }\\
Q  &  =\left(
\begin{array}
[c]{cc}%
\cos\frac{\theta}{2} & \sin\frac{\theta}{2}\\
-\sin\frac{\theta}{2} & \cos\frac{\theta}{2}%
\end{array}
\right)  ,
\end{align*}
we can use $(\alpha,\rho,\varphi,\theta)$ as (local) coordinates in $M$, where
$\rho,\varphi,\theta$ have the previous geometric interpretation as
coordinates in the moduli space $\bar{M}=M/SO(2)$, and the angle $\alpha$
parametrizes the "congruence" rotation group $SO(2)$ acting by multiplication
on the left side of $\mathfrak{S}$. Moreover, the columns $\mathbf{u}%
_{1},\mathbf{u}_{2}$ of $P$ are the eigenvectors of the inertia tensor of the
3-body system, hence constitute an intrinsic moving frame of the 3-body
motion, and by definition of $\Phi$%
\[
\lbrack\mathbf{x}_{1}|\mathbf{x}_{2}]=\rho\lbrack\mathbf{u}_{1}|\mathbf{u}%
_{2}]\left(
\begin{array}
[c]{cc}%
\sin(\frac{\varphi}{2}+\frac{\pi}{4})\cos\frac{\theta}{2} & \sin(\frac
{\varphi}{2}+\frac{\pi}{4})\sin\frac{\theta}{2}\\
-\cos(\frac{\varphi}{2}+\frac{\pi}{4})\sin\frac{\theta}{2} & \cos
(\frac{\varphi}{2}+\frac{\pi}{4})\cos\frac{\theta}{2}%
\end{array}
\right)
\]

Now, the kinetic energy can be expressed as%
\begin{align*}
T  &  =\frac{1}{2}tr(\dot{X}\dot{X}^{t})=\frac{1}{2}(|\mathbf{\dot{x}}%
_{1}|^{2}+|\mathbf{\dot{x}}_{2}|^{2})\\
&  =\frac{1}{2}\dot{\rho}^{2}+\frac{\rho^{2}}{8}(\dot{\varphi}^{2}+\dot
{\theta}^{2})+\frac{1}{2}\rho^{2}\dot{\alpha}^{2}-\frac{1}{2}\rho^{2}%
(\cos\varphi)\dot{\alpha}\dot{\theta},
\end{align*}
and the angular momentum as
\begin{equation}
\omega=\frac{\partial T}{\partial\dot{\alpha}}=\rho^{2}(\dot{\alpha}-\frac
{1}{2}\cos\varphi\dot{\theta}), \label{alfadot}%
\end{equation}
which also equals the cross-product $X\times\dot{X}$ \ (appropriately defined).

The equations of the Newtonian motion $t\rightarrow X(t)$ are equivalent to
the Euler-Lagrange equations associated with the Lagrange function
\[
L=T+U=T+\frac{U^{\ast}(\varphi,\theta)}{\rho},\
\]
namely the following equations
\begin{align*}
\dot{\omega}  &  =\frac{d}{dt}(\frac{\partial T}{\partial\dot{\alpha}}%
)=\frac{d}{dt}(\frac{\partial L}{\partial\dot{\alpha}})=\frac{\partial
L}{\partial\alpha}=0\\
\frac{d}{dt}(\frac{\partial L}{\partial\dot{\rho}})  &  =\frac{\partial
L}{\partial\rho},\text{ }\frac{d}{dt}(\frac{\partial L}{\partial\dot{\varphi}%
})=\frac{\partial L}{\partial\varphi},\text{ }\frac{d}{dt}(\frac{\partial
L}{\partial\dot{\theta}})=\frac{\partial L}{\partial\theta}%
\end{align*}
The first equation simply says $\omega$ is constant. In the last three
equations the occurrence of $\dot{\alpha}$ and $\ddot{\alpha}$ can be
eliminated by (\ref{alfadot}), namely using%
\[
\dot{\alpha}=\frac{\omega}{\rho^{2}}+\frac{1}{2}\cos\varphi\dot{\theta}\text{,
}%
\]
and this yields, in fact, the three equations of (\ref{NewtonRedu}).

\subsection{The initial value problem in the moduli space}

The ODE (\ref{NewtonRedu}) in the moduli space $\bar{M}\simeq$ $\mathbb{R}%
^{3}$ is an analytic system which depends analytically on the scalar angular
momentum $\omega$, and moreover, there is a first integral (\ref{h-int}) whose
value at a given integral curve is the total energy $h$. Thus we shall refer
to the pair of constants $(h,\omega)$ as the energy-momentum level of the
curve. These curves, which we shall refer to as \textit{moduli curves}, are
therefore analytic curves
\[
t\rightarrow\bar{\gamma}(t)=(\rho(t),\varphi(t),\theta(t)),
\]
with power series expansions%
\begin{align}
\rho(t)  &  =\rho_{0}+\rho_{1}t+\rho_{2}t^{2}+.....+\nonumber\\
\varphi(t)  &  =\varphi_{0}+\varphi_{1}t+\varphi_{2}t^{2}%
+.....+\label{intcurve}\\
\theta(t)  &  =\theta_{0}+\theta_{1}t+\theta_{2}t^{2}+.....+\nonumber
\end{align}
at a chosen initial point $(\rho_{0,}\varphi_{0},\theta_{0})$. The initial
value problem in $\bar{M}$ is to determine the solution $\bar{\gamma}(t)$ from
a given \emph{initial data }set%
\begin{equation}
\rho_{0},\varphi_{0},\theta_{0};\rho_{1},\varphi_{1},\theta_{1}
\label{initial}%
\end{equation}
We shall regard $\omega$ as a given constant (or parameter) whereas the value
of $h$ is determined by $\omega$ and the initial data (\ref{initial}) by
evaluation of the expression (\ref{h-int}) at initial time $t=0$.

In contrast to this, the spherical initial data
\begin{equation}
\varphi_{0},\theta_{0};\varphi_{1},\theta_{1} \label{initial*}%
\end{equation}
would not suffice to determine the associated \textit{shape curve}%
\[
t\rightarrow\gamma^{\ast}(t)=(\varphi(t),\theta(t))
\]
on the sphere, since the curve is not the solution of a second order
differential equation on the sphere. Anyhow, elimination of $\rho$ and solving
an initial value problem purely on the sphere does not seem to be a tractable
approach for our purpose.\ \ \ \ \ \ \ \ \ \ \ \ \ \ \ \ \ \ \ \ \ \ \ \ \ \ \ \ \ \ \ \ \ \ \ \ \ \ \ \ \ \ \ \ \ \ \ \ \ \ \ \ \ \ \ \ \ \ \ \ \ \ \ \ \ \ \ \ \ \ \ \ \ \ \ \ \ \ \ \ \ \ \ \ \ \ \ \ \ \ \ \ \ \ \ \ \ \ \ \ \ \ \ \ \ \ \ \ \ \ \ \ \ \ \ \ \ \ \ \ \ \ \ \ \ \ \ \ \ \ \ \ \ \ \ \ \ \ \ \ \ \ \ \ \ \ \ \ \ \ \ \ \ \ \ \ \ \ \ \ \ \ \ \ \ \ \ \ \ \ \ \ \ \ \ \ \ \ \ \ \ \ \ \ \ \ \ \ \ \ \ \ \ \ \ \ \ \ \ \ \ \ \ \ \ \ \ \ \ \ \ \ \ \ \ \ \ \ \ \ \ \ \ \ \ \ \ \ \ \ \ \ \ \ \ \ \ \ \ \ \ \ \ \ \ \ \ \ \ \ \ \ \ \ \ \ \ \ \ \ \ \ \ \ \ \ \ \ \ \ \ \ \ \ \ \ \ \ \ \ \ \ \ \ 

Let us briefly consider the initial value problem for the system
(\ref{NewtonRedu}) and the dependence on the parameter $\omega$.

\begin{lemma}
Suppose there is a planar 3-body motion
\[
t\rightarrow\gamma(t)=(\mathbf{a}_{1}(t),\mathbf{a}_{2}(t),\mathbf{a}_{3}(t))
\]
in the configuration space (\ref{M}), starting at $\gamma(0)=\delta_{0}$ with
the initial velocity $\dot{\gamma}(0)=\delta_{1}$ and angular momentum vector
$\omega\mathbf{k}$, cf. (\ref{1.3}) and (\ref{angmom}). Let the corresponding
initial data for the moduli curve $\bar{\gamma}(t)$ in $\bar{M}$ be the
numbers in (\ref{initial}). Then for any number $\omega_{\rhd}$ there is a
planar 3-body motion in $M$ with angular momentum $\omega_{\triangleright
}\mathbf{k}$, whose moduli curve $\bar{\gamma}_{\triangleright}(t)$ has the
same initial data (\ref{initial}) as $\bar{\gamma}(t)$. Namely, the moduli
curves $\bar{\gamma}(t)$ and $\bar{\gamma}_{\triangleright}(t)$ are determined
by the ODE (\ref{NewtonRedu}) with the same initial data, but with parameters
$(h,\omega)$ and $(h_{\triangleright},\omega_{\triangleright})$ respectively,
where $h_{\triangleright}$ is the total energy (\ref{h-int}) determined by
$\omega_{\triangleright}$ and the above initial data.
\end{lemma}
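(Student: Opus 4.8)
The plan is to exploit the fact that the reduced system (\ref{NewtonRedu}) together with the energy integral (\ref{h-int}) only ever sees $\omega$ as an explicit parameter, so that the construction of $\bar{\gamma}_{\triangleright}(t)$ is a matter of solving an initial value problem rather than of producing a genuinely new motion by hand. First I would observe that the six numbers in (\ref{initial}) constitute a complete set of Cauchy data for the second-order system consisting of equations (i)--(iii) of (\ref{NewtonRedu}): given $\omega_{\triangleright}$ and the data (\ref{initial}), the right-hand sides of (ii) and (iii) are analytic in $(\rho,\varphi,\theta,\dot\rho,\dot\varphi,\dot\theta)$ away from the collision locus and the poles, and equation (i) is already solved for $\ddot\rho$; hence by Cauchy--Kovalevskaya (or just Picard--Lindel\"of, since the system is an ODE) there is a unique analytic solution $\bar{\gamma}_{\triangleright}(t)=(\rho(t),\varphi(t),\theta(t))$ near $t=0$ with those initial values. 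Define $h_{\triangleright}$ by plugging the initial data and $\omega_{\triangleright}$ into (\ref{h-int}) at $t=0$; since (iv) is a first integral of (i)--(iii), this value is preserved along the solution, so $\bar{\gamma}_{\triangleright}(t)$ genuinely lives on the energy--momentum level $(h_{\triangleright},\omega_{\triangleright})$.

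The remaining point is that this abstract solution of (\ref{NewtonRedu}) is the moduli curve of an actual planar $3$-body motion in $M$. For this I would invoke the reconstruction (lifting) half of the reduction already recalled in the introduction and set up in Section 2.2: the passage from $X(t)=[\mathbf{x}_1|\mathbf{x}_2]$ in $M$ to $(\rho,\varphi,\theta)$ in $\bar{M}$ is via the singular-value coordinates $(\alpha,\rho,\varphi,\theta)$, and the congruence angle $\alpha(t)$ is recovered by integrating (\ref{alfadot}),
\[
\dot{\alpha}=\frac{\omega_{\triangleright}}{\rho(t)^{2}}+\tfrac{1}{2}\cos\varphi(t)\,\dot\theta(t),
\]
with an arbitrary choice of $\alpha(0)$ (this is the ``modulo a fixed rotation of the plane'' in Theorem A). Plugging $(\alpha(t),\rho(t),\varphi(t),\theta(t))$ back through $\Phi$ produces a curve $X(t)$ in $M$, and because (\ref{NewtonRedu}) was derived precisely as the $(\rho,\varphi,\theta)$-reduction of the Euler--Lagrange equations for $L=T+U$ with $\dot\alpha$ eliminated by (\ref{alfadot}), retracing that derivation backwards shows $X(t)$ satisfies the full Euler--Lagrange system, i.e. Newton's equation (\ref{1.2}); its angular momentum is $\partial T/\partial\dot\alpha=\rho^2(\dot\alpha-\tfrac12\cos\varphi\,\dot\theta)=\omega_{\triangleright}$ by construction, so the angular momentum vector is $\omega_{\triangleright}\mathbf{k}$. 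Choosing $X(0)$ (via $\alpha(0)$) appropriately gives a planar motion with the stated properties whose moduli curve is $\bar\gamma_{\triangleright}$.

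I expect the only real subtlety to be bookkeeping about where the coordinates are valid: the singular-value parametrization $\Phi$ is only a local analytic diffeomorphism at generic points, and equations (ii)--(iii) degenerate at the poles $\varphi=0,\pi$ (double collisions or central configurations in the $\varphi$-direction) and at binary-collision longitudes. So I would state the lemma for an initial point $\delta_0$ that is a regular point of the reduction — equivalently $\rho_0>0$, $\varphi_0\notin\{0,\pi\}$, and $\delta_0$ not a binary collision — which is exactly the ``generic/regular point'' hypothesis already in force throughout Section 1.3 and consistent with remark (ii) excluding exceptional shape curves. Under that hypothesis the lift is unambiguous up to the single constant $\alpha(0)$, and no further hard analysis is needed: the whole argument is ``uniqueness for an ODE with a parameter'' plus ``the reduction is reversible at regular points,'' both of which are available from the material already developed.
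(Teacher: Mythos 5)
Your argument is correct, but it runs in the opposite direction from the paper's. The paper stays upstairs in the configuration space $M$: it orthogonally decomposes the initial velocity as $\delta_{1}=\delta_{1}^{\tau}+\delta_{1}^{\omega}$ with $\delta_{1}^{\omega}=\mathbf{\tilde{\omega}}\times\delta_{0}$ tangent to the $SO(2)$-orbit, observes that replacing $\omega$ by $\omega_{\triangleright}$ amounts to changing only $\delta_{1}^{\omega}$ while keeping $(\delta_{0},\delta_{1}^{\tau})$ fixed, and notes that the moduli initial data (\ref{initial}) depend only on $(\delta_{0},\delta_{1}^{\tau})$ because the orbit map is a Riemannian submersion; the desired motion is then just the Newtonian trajectory in $M$ with the modified initial velocity, and no lifting is needed. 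You instead work downstairs: solve (\ref{NewtonRedu}) with parameter $\omega_{\triangleright}$ and the given Cauchy data, then reconstruct $X(t)$ by integrating (\ref{alfadot}) and reversing the Euler--Lagrange reduction of Section 2.2. Both are sound. The paper's route is shorter and avoids all coordinate issues (it needs only $\rho_{0}>0$, since the decomposition of $\delta_{1}$ is globally defined off the total collision), whereas your route has to worry about where $\Phi$ is a local diffeomorphism and where the spherical coordinates degenerate; on the other hand, your argument makes explicit the two-way correspondence between solutions of (\ref{NewtonRedu}) and Newtonian motions, which is precisely what Proposition 3.5 and the bullet points of Section 3.3 later rely on, so it is not wasted machinery. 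One small factual slip in your caveat paragraph: the poles $\varphi=0,\pi$ of the shape sphere are the two oriented equilateral (Lagrange) shapes, not double collisions --- the binary collision points $b_{i}$ lie on the equator $\varphi=\pi/2$ --- though this does not affect the validity of your restriction to regular initial points.
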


\begin{proof}
The initial velocity of the original 3-body motion has an orthogonal
decomposition \ \ \ \
\[
\delta_{1}=\delta_{1}^{\tau}+\delta_{1}^{\omega},\text{ \ }\delta_{1}^{\omega
}=\mathbf{\tilde{\omega}\times}\delta_{0}=(\mathbf{\tilde{\omega}\times a}%
_{1}(0),\mathbf{\tilde{\omega}\times a}_{2}(0),\mathbf{\tilde{\omega}\times
a}_{3}(0))
\]
where $\mathbf{\tilde{\omega}}$ is the (instantaneous) rotational velocity
vector, which in the case of planar motions is $\mathbf{\tilde{\omega}=}%
\omega\rho_{0}^{-2}\mathbf{k}$. Evidently, we may freely modify the velocity
component $\delta_{1}^{\omega}$ by changing the scalar $\omega$
correspondingly, keeping $\delta_{0}$ and $\delta_{1}^{\tau}$ unchanged. But
on the other hand, the initial data (\ref{initial}) in the moduli space
$\bar{M}$ depend only on $(\delta_{0},\delta_{1}^{\tau})$, so this proves the lemma.
\end{proof}

\begin{remark}
(i) The various shape curves $\gamma_{+}^{\ast}$ with the same initial data
(\ref{initial*}) at a point $p\in M^{\ast}$ are uniquely distinguished by the
parameter $\omega$, or equivalently by their curvature $K^{\ast}$ at $p$, see
(\ref{K*}).\ \ \ \ $\ $\ \ \ \ \ \ \ \ \ \ \ \ \ \ \ \ \ \ \ \ \ \ \ \ \ \ \ \ \ \ \ \ \ \ \ \ \ \ \ \ \ \ \ \ \ \ \ \ \ \ \ \ \ \ \ \ \ \ \ \ \ \ \ \ \ \ \ \ \ \ \ \ \ \ \ \ \ \ \ \ \ \ \ \ \ \ \ \ \ \ \ \ \ \ \ \ \ \ \ \ \ \ \ \ \ \ \ \ \ \ \ \ \ \ \ \ \ \ \ \ \ \ \ \ \ \ \ \ \ \ \ \ \ \ \ \ \ \ \ \ \ \ \ \ \ \ \ \ \ \ \ \ \ \ \ \ \ \ \ \ \ \ \ \ \ \ \ \ \ \ \ \ \ \ \ \ \ \ \ \ \ \ \ \ \ \ \ \ \ \ \ \ \ \ \ \ \ \ \ \ \ \ \ \ \ \ \ \ \ \ \ \ \ \ \ \ \ \ \ \ \ \ \ \ \ \ \ \ \ \ \ \ \ \ \ \ \ \ \ \ \ \ \ \ \ \ \ \ \ \ \ \ \ \ \ \ \ \ \ \ \ \ \ \ \ \ \ \ \ \ \ \ \ \ \ \ \ \ \ \ \ \ \ \ \ \ \ \ \ \ \ \ \ \ \ \ \ \ \ \ \ \ \ \ \ \ \ \ \ \ \ \ \ \ \ \ \ \ \ \ \ \ \ \ \ \ \ \ \ \ \ \ \ \ \ \ \ \ \ \ \ \ \ \ \ \ \ \ \ \ \ \ \ \ \ \ \ \ \ \ \ \ \ \ \ \ \ \ \ \ \ \ \ \ \ \ \ \ \ \ \ \ \ \ \ \ \ \ \ \ \ \ \ \ \ \ \ \ \ \ \ \ \ \ \ \ \ \ \ \ \ \ \ \ \ \ \ \ \ \ \ \ \ \ \ \ \ \ \ \ \ \ \ \ \ \ \ \ \ \ \ \ \ \ \ \ \ \ \ \ \ \ \ \ \ \ \ \ \ \ \ \ \ \ \ \ \ \ \ \ \ \ \ \ \ \ \ \ \ \ \ \ \ \ \ \ \ \ \ \ \ \ \ \ \ \ \ \ \ \ \ \ \ \ \ \ \ \ \ \ \ \ \ \ \ \ \ \ \ \ \ \ \ \ \ \ \ \ \ \ \ \ \ \ \ \ \ \ \ \ \ \ \ \ \ \ \ \ \ \ \ \ \ \ \ \ \ \ \ \ \ \ \ \ \ \ \ \ \ \ \ \ \ \ \ \ \ \ \ \ \ \ \ \ \ \ \ \ \ \ \ \ \ \ \ \ \ \ \ \ \ \ \ \ \ \ \ \ \ \ \ \ \ \ \ \ \ \ \ \ \ \ \ \ \ \ \ \ \ \ \ \ \ \ \ \ \ \ \ \ \ \ \ \ \ \ \ \ \ \ \ \ \ \ \ \ \ \ \ \ \ \ \ \ \ \ \ \ \ \ \ \ \ \ \ \ \ \ \ \ \ \ \ \ \ \ \ \ \ \ \ \ \ \ \ \ \ \ \ \ \ \ \ \ \ \ \ \ \ \ \ \ \ \ \ \ \ \ \ \ \ \ \ \ \ \ \ \ \ \ \ \ \ \ \ \ \ \ \ \ \ \ \ \ \ \ \ \ \ \ \ \ \ \ \ \ \ \ \ \ \ \ \ \ \ \ \ \ \ \ \ \ \ \ \ \ \ \ \ \ \ \ \ \ \ \ \ \ \ \ \ \ \ \ \ \ \ \ \ \ \ \ \ \ \ \ \ \ \ \ \ \ \ \ \ \ \ \ \ \ \ \ \ \ \ \ \ \ \ \ \ \ \ \ \ \ \ \ \ \ \ \ \ \ \ \ \ \ \ \ \ \ \ \ \ \ \ \ \ \ \ \ \ \ \ \ \ \ \ \ \ \ \ \ \ \ \ \ \ \ \ \ \ \ \ \ \ \ \ \ \ \ \ \ \ \ \ \ \ \ \ \ \ \ \ \ \ \ \ \ \ \ \ \ \ \ \ \ \ \ \ \ \ \ \ \ \ \ \ \ \ \ \ \ \ \ \ \ \ \ \ \ \ \ \ \ \ \ \ \ \ \ \ \ \ \ \ \ \ \ \ \ \ \ \ \ \ \ \ \ \ \ \ \ \ \ \ \ \ \ \ \ \ \ \ \ \ \ \ \ \ \ \ \ \ \ \ \ \ \ \ \ \ \ \ \ \ \ \ \ \ \ \ \ \ \ \ \ \ \ \ \ \ \ \ \ \ \ \ \ \ \ \ \ \ \ \ \ \ \ \ \ \ \ \ \ \ \ \ \ \ \ \ \ \ \ \ \ \ \ \ \ \ \ \ \ \ \ \ \ \ \ \ \ \ \ \ \ \ \ \ \ \ \ 

(ii) The various moduli curves $\bar{\gamma}(t)$ with the same initial data
(\ref{initial}) are also distinguished by the parameter $\omega$. For
$\omega=0$ the curves are geodesics with respect to the dynamical metric
$d\bar{s}_{h,0}^{2}$, cf. (\ref{2.12}), so $\omega$ acts as a deformation
parameter on this family of curves. \ 
\end{remark}

\begin{remark}
\label{scaling}The 3-body motions in $M$ have a 1-parameter symmetry group
$\{\Phi_{k}$,$k\neq0\}$, namely time-size scaling symmetries $\Phi_{k}$, which
transform a trajectory $t\rightarrow\gamma(t)$ to the tractory \
\[
t\rightarrow\gamma^{(k)}(t)=k^{-2/3}\gamma(kt)
\]

\end{remark}

The effect of the transformation $\Phi_{k}$ on initial data in $\bar{M}$ and
parameters $(h,\omega)$ is as follows%
\begin{align}
(\rho_{0},\varphi_{0},\theta_{0};\rho_{1},\varphi_{1},\theta_{1})  &
\rightarrow(k^{-2/3}\rho_{0},\varphi_{0},\theta_{0};k^{1/3}\rho_{1}%
,k\varphi_{1},k\theta_{1})\text{; }\label{effect}\\
\text{\ }\omega &  \rightarrow k^{-1/3}\omega,h\rightarrow k^{2/3}h\nonumber
\end{align}
In particular, there is the time reversal transformation $\Phi_{-1}$ which
converts a 3-body motion $t\rightarrow\gamma(t)$ to the motion $t\rightarrow
\gamma^{(-1)}(t)=\gamma(-t)$ in the opposite direction. This changes the sign
of $\omega$, but the sign of $h$ is invariant. Also note that the quantity
$H=h\omega^{2}$ is an invariant of the symmetry group, and so is the
(unoriented, geometric) shape curve $\gamma^{\ast}$.

\subsection{Geometry on the shape sphere}

\subsubsection{Temporal and intrinsic invariants and their order}

We shall make effective usage of the fact that the shape space is the round
sphere $S^{2}$, and moreover, essential information about the moduli curve
$\bar{\gamma}(t)$ in $\bar{M}$ is encoded into the relative geometry between
the shape curve $\gamma^{\ast}$ and the gradient field $\nabla U^{\ast}$ on
$S^{2}$. Two types of quantities (also called invariants) are involved in this
interplay and we shall refer to them as being either \emph{temporal} or
\emph{intrinsic}\textit{.}

The temporal invariants are associated with $\bar{\gamma}(t)$ and
differentiation with respect to time $t$, whereas the intrinsic ones depend on
the geometric curve $\gamma^{\ast}$ or the relative geometry between this
curve and the gradient flow of $U^{\ast}$, which may involve differentiations
with respect to the arc-length of $\gamma^{\ast}$. So generally, we shall
define the \emph{order} of the invariant to be the highest order of
differentiations with respect to $t$ or $s$. For example, the coefficients
$\rho_{k},\varphi_{k},\theta_{k}$ of the expansions (\ref{intcurve}) are
temporal invariants of order k, but we shall regard $\varphi_{0},\theta_{0}$
as intrinsic since they simply specify the chosen initial point $p\in$
$\gamma^{\ast}$.

In a more general setting, let us start with a given time parametrized
analytic curve $\gamma^{\ast}(t)=(\varphi(t),\theta(t))$ on the sphere $S^{2}%
$, and let $s\geq0$ be its arc-length parameter measured from an initial point
$p=\gamma^{\ast}(0)=(\varphi_{0},\theta_{0})$. The linkage between the
parameters $t$ and $s$, which is a one-to-one correspondence in general, is
given by the speed function
\begin{equation}
v=v(t)=ds/dt\geq0 \label{v}%
\end{equation}
and corresponding differential operators are related by
\begin{equation}
\frac{d}{ds}=\frac{1}{v}\frac{d}{dt}\text{, \ }\frac{d^{2}}{ds^{2}}=\frac
{1}{v^{2}}\frac{d^{2}}{dt^{2}}-\frac{\dot{v}}{v^{3}}\frac{d}{dt}\text{, etc.}
\label{t-s}%
\end{equation}
Since the sphere has the Riemannian metric $ds^{2}=d\varphi^{2}+(\sin
^{2}\varphi)d\theta^{2}$, the speed and its time derivative have the expressions%

\begin{equation}
v=\sqrt{\dot{\varphi}^{2}+(\sin^{2}\varphi)\dot{\theta}^{2}}\text{, \ }\dot
{v}=\frac{d}{dt}v=\frac{1}{v}[\dot{\varphi}\ddot{\varphi}+(\sin\varphi
\cos\varphi)\dot{\varphi}\dot{\theta}^{2}+\sin^{2}(\varphi)\dot{\theta}%
\ddot{\theta}] \label{3.2}%
\end{equation}
and they are viewed as temporal invariants of order 1 and 2, respectively.

\begin{definition}
A \emph{direction element} on the sphere $S^{2}$ consists of a point $p\in
S^{2}$ together with a tangential direction at $p$, and it is denoted by
$(J_{\varphi},J_{\theta})_{p}$ or simply $(J_{\varphi},J_{\theta})$ when the
point $p$ is tacitly understood. It is said to be \emph{regular} if $\nabla
U^{\ast}$ is transversal to it.
\end{definition}

In fact, the direction is represented by the pair
\[
(J_{\varphi},J_{\theta})=(\frac{\partial\varphi}{\partial s},\frac
{\partial\theta}{\partial s})
\]
which is an intrinsic invariant of order 1, and the following identities hold%

\begin{equation}
\dot{\varphi}=J_{\varphi}v,\dot{\theta}=J_{\theta}v,\ \ \ \ J_{\varphi}%
^{2}+\sin^{2}(\varphi)J_{\theta}^{2}=1 \label{3.4}%
\end{equation}

Next, consider the two positively oriented orthonormal moving frames
$(\mathbf{\tau}^{\ast},\mathbf{\nu}^{\ast})$ and $(\frac{\partial}%
{\partial\varphi},\frac{1}{\sin\varphi}\frac{\partial}{\partial\theta})$ along
the oriented spherical curve $\gamma^{\ast}$, where $\mathbf{\tau}^{\ast}$
(resp. $\mathbf{\nu}^{\ast}$) is the unit tangent (resp. normal) vector.
Writing the frames formally as column vectors we can express their
relationship at each point $p$ by a rotation matrix defined by the direction
element, namely
\begin{equation}
(\mathbf{\tau}^{\ast},\mathbf{\nu}^{\ast})^{T}=\left(
\begin{array}
[c]{cc}%
J_{\varphi} & J_{\theta}\sin\varphi\\
-J_{\theta}\sin\varphi & J_{\varphi}%
\end{array}
\right)  (\frac{\partial}{\partial\varphi},\frac{1}{\sin\varphi}\frac
{\partial}{\partial\theta})^{T}, \label{frames}%
\end{equation}

On the other hand, the gradient field of $U^{\ast}$ on the sphere is
\[
\nabla U^{\ast}=U_{\varphi}^{\ast}\frac{\partial}{\partial\varphi}%
+\frac{U_{\theta}^{\ast}}{\sin^{2}\varphi}\frac{\partial}{\partial\theta},
\]
and its inner product with $\mathbf{\tau}^{\ast}$ and $\mathbf{\nu}^{\ast}$
yields the \emph{tangential }and \emph{normal derivative} of $U^{\ast}$ along
the curve, namely the pair $(U_{\mathbf{\tau}}^{\ast},U_{\mathbf{\nu}}^{\ast
}).$ The latter is, in fact, related to the pair of partial derivatives
$(U_{\varphi}^{\ast},\frac{1}{\sin\varphi}U_{\theta}^{\ast})$ via the same
matrix as in (\ref{frames}), as follows
\begin{equation}
(U_{\mathbf{\tau}}^{\ast},U_{\mathbf{\nu}}^{\ast})^{T}=\left(
\begin{array}
[c]{cc}%
J_{\varphi} & J_{\theta}\sin\varphi\\
-J_{\theta}\sin\varphi & J_{\varphi}%
\end{array}
\right)  (U_{\varphi}^{\ast},\frac{1}{\sin\varphi}U_{\theta}^{\ast})^{T}
\label{U-deriv}%
\end{equation}
In particular, the left side is an intrinsic invariant of order 1 which
represents the gradient field $\nabla U^{\ast}$ along the curve $\gamma^{\ast
}$.

\subsubsection{Geodesic curvature of the shape curve}

The geometry of the spherical curve $\gamma^{\ast}$ itself is encoded into its
geodesic curvature function $K^{\ast}(s)$, and in general $\gamma^{\ast}(s)$
is in fact completely determined by the intrinsic function $K^{\ast}$ and the
initial direction of $\gamma^{\ast}$. Below we shall calculate and deduce an
expression for $K^{\ast}(s)$ in terms of simpler invariants of order $\leq1$.

One way to calculate $K^{\ast}$ is to express $\gamma^{\ast}$ in Euclidean
coordinates as $\mathbf{x}(s)=(x(s),y(s),z(s))$ and use the formula%
\[
K^{\ast}(s)=\mathbf{x}(s)\times\mathbf{x}^{\prime}(s)\cdot\mathbf{x}%
^{\prime\prime}(s)
\]
where differentiation is with respect to arc-length $s$. Then, by returning to
spherical coordinates and writing $\varphi^{\prime}=d\varphi/ds$ etc.
\begin{align}
K^{\ast}  &  =(\cos\varphi)\theta^{\prime}(1+\varphi^{\prime2})+\sin
\varphi(\varphi^{\prime}\theta^{\prime\prime}-\theta^{\prime}\varphi
^{\prime\prime})\nonumber\\
&  =\frac{1}{v^{3}}\left\{  (\cos\varphi)\dot{\theta}(v^{2}+\dot{\varphi}%
^{2})+\sin\varphi(\dot{\varphi}\ddot{\theta}-\dot{\theta}\ddot{\varphi
})\right\}  \text{,} \label{K*(t)}%
\end{align}
where in the first line the expression for $K^{\ast}$ is intrinsic and the
second is an expression in temporal invariants.

Next, let us eliminate the second order terms $\ddot{\varphi}$ and
$\ddot{\theta}$ in the expression (\ref{K*(t)}), using equations (ii), (iii)
of the ODE system (\ref{NewtonRedu}). The ensuing calculations
\begin{align*}
K^{\ast}v^{3}  &  =(\cos\varphi)\dot{\theta}(v^{2}+\dot{\varphi}^{2}%
)+(\sin\varphi)\dot{\varphi}\left(  -\frac{2\dot{\rho}}{\rho}\dot{\theta
}-2(\cot\varphi)\dot{\varphi}\dot{\theta}-\frac{2\omega}{\rho^{2}\sin\varphi
}\dot{\varphi}+\frac{4}{\rho^{3}}\frac{1}{\sin^{2}\varphi}U_{\theta}^{\ast
}\right) \\
&  -(\sin\varphi)\dot{\theta}\left(  -\frac{2\dot{\rho}}{\rho}\dot{\varphi
}+\frac{1}{2}\sin(2\varphi)\dot{\theta}^{2}+\frac{2\omega\sin\varphi}{\rho
^{2}}\dot{\theta}+\frac{4}{\rho^{3}}U_{\varphi}^{\ast}\right) \\
&  =\frac{4}{\rho^{3}}\left(  \frac{\dot{\varphi}}{\sin\varphi}U_{\theta
}^{\ast}-\dot{\theta}\sin\varphi U_{\varphi}^{\ast}\right)  -\frac{2\omega
}{\rho^{2}}(\dot{\varphi}^{2}+\sin^{2}\varphi\dot{\theta}^{2})=\frac{4v}%
{\rho^{3}}U_{\mathbf{\nu}}^{\ast}-\frac{2\omega}{\rho^{2}}v^{2}%
\end{align*}
lead us to the fundamental curvature formula\emph{ }$(\ref{K*})$, which we may
also write as
\begin{equation}
\rho^{3}v^{2}+\frac{2\omega}{K^{\ast}}\rho v=4\mathfrak{S} \label{K*b}%
\end{equation}
where
\begin{equation}
\mathfrak{S}=\frac{U_{\nu}^{\ast}}{K^{\ast}} \label{Siegel}%
\end{equation}
is the intrinsic \emph{Siegel function} introduced in \cite{HS-2}. This
function neatly encodes the relative geometry of the pair $(\mathfrak{\gamma
}^{\ast},\nabla U^{\ast})$, and it is also independent of the orientation of
$\gamma^{\ast}$.

In the special case that $K^{\ast}$ vanishes, the above calculation of
$K^{\ast}$ yields the identity%
\begin{equation}
\rho v=\frac{2U_{\nu}^{\ast}}{\omega} \label{K*c}%
\end{equation}

\subsubsection{Power series expansions of functions on the shape sphere}

Of primary interest are the solution curves $\bar{\gamma}(t)$ of the ODE
(\ref{NewtonRedu}), whose coordinate functions $\rho(t),\varphi(t),\theta(t)$
are temporal invariants, and by definition, so are the coefficients of their
series expansions (\ref{intcurve}). Furthermore, evaluation of functions on
the shape sphere along the curve $\gamma^{\ast}(t)$ $=(\varphi(t),\theta(t))$
yields series expansions in time $t$, such as%
\begin{align}
U^{\ast}  &  =u_{0}+\grave{u}_{1}t+\grave{u}_{2}t^{2}+...\nonumber\\
U_{\varphi}^{\ast}  &  =\mu_{0}+\mu_{1}t+\mu_{2}t^{2}+...\label{3.17}\\
U_{\theta}^{\ast}  &  =\eta_{0}+\eta_{1}t+\eta_{2}t^{2}+....\nonumber
\end{align}
The coefficients are temporal invariants, except the leading terms which are,
by definition, intrinsic and of order zero.

Moreover, we also need to expand the shape speed
\[
v=v_{0}+v_{1}t+v_{2}t^{2}+...
\]
whose first two coefficients are readily obtained from (\ref{3.2} )
\begin{align}
v_{0}  &  =\sqrt{\varphi_{1}^{2}+g_{0}\theta_{1}^{2}}\text{, \ }\label{v0}\\
\text{\ }v_{1}  &  =\frac{1}{v_{0}}[2\varphi_{1}\varphi_{2}+\frac{f_{0}}%
{2}\varphi_{1}\theta_{1}^{2}+2g_{0}\theta_{1}\theta_{2}], \label{v1}%
\end{align}
where we have simplified notation by setting
\[
f_{0}=\sin2\varphi_{0}\text{, \ }g_{0}=\sin^{2}\varphi_{0}%
\]
Clearly, $v_{0}$ and $v_{1}$ are invariants of order $1$ and $2$, respectively.

The change of direction of a shape curve is expressed by functions such as%

\[
J_{\varphi}^{\prime}=\frac{d}{ds}J_{\varphi},\text{ }J_{\theta}^{\prime}%
=\frac{d}{ds}J_{\theta},\text{ }J_{\varphi}^{\prime\prime}=\frac{d^{2}}%
{ds^{2}}J_{\varphi},\text{ etc.}\
\]
and we shall use the same notation for their evaluation at $s=0$, for example
by (\ref{3.4})
\ \ \ \ \ \ \ \ \ \ \ \ \ \ \ \ \ \ \ \ \ \ \ \ \ \ \ \ \ \ \ \ \ \ \ \ \ \ \ \ \ \ \ \ \ \ \ \ \ \ \ \ \ \ \ \ \ \ \ \ \ \
\begin{equation}
J_{\varphi}^{\prime}=\frac{1}{v_{0}^{2}}(2\varphi_{2}-v_{1}J_{\varphi}),\text{
}J_{\theta}^{\prime}=\frac{1}{v_{0}^{2}}(2\theta_{2}-v_{1}J_{\theta})\text{,
etc. } \label{Jprime1}%
\end{equation}
These are intrinsic invariants of order 2, although the above expressions
involve temporal ones $\varphi_{2},\theta_{2},v_{1}$ of order 2.

Next, the coefficients of series expansions with respect to arc-length $s$ of
$\gamma^{\ast}$, such as
\begin{align}
U^{\ast}  &  =u_{0}+u_{1}s+u_{2}s^{2}+..\nonumber\\
U_{\tau}^{\ast}  &  =\tau_{0}+\tau_{1}s+\tau_{2}s^{2}+...(note:\tau
_{k-1}=ku_{k})\nonumber\\
U_{\nu}^{\ast}  &  =w_{0}+w_{1}s+w_{2}s^{2}+...\label{3.16}\\
K^{\ast}  &  =K_{0}+K_{1}s+K_{2}s^{2}+...\nonumber\\
\mathfrak{S}  &  =\mathfrak{S}_{0}+\mathfrak{S}_{1}s+\mathfrak{S}_{2}%
s^{2}+...\nonumber
\end{align}
are intrinsic invariants. The beginning coefficients are calculated using
(\ref{frames})-(\ref{U-deriv}) and (\ref{K*(t)}):
\begin{align}
u_{0}  &  =U^{\ast}(\varphi_{0},\theta_{0}),\text{ \ }\tau_{0}=u_{1}%
=U_{\varphi}^{\ast}J_{\varphi}+U_{\theta}^{\ast}J_{\theta}\ \nonumber\\
\tau_{1}  &  =U_{\varphi}^{\ast}J_{\varphi}^{\prime}+U_{\theta}^{\ast
}J_{\theta}^{\prime}+(U_{\varphi\varphi}^{\ast}J_{\varphi}^{2}+2U_{\varphi
\theta}^{\ast}J_{\varphi}J_{\theta}+U_{\theta\theta}^{\ast}J_{\theta}%
^{2})\nonumber\\
\text{ }w_{0}  &  =(\frac{U_{\theta}^{\ast}}{g_{0}}J_{\varphi}-U_{\varphi
}^{\ast}J_{\theta})\sin\varphi_{0}\label{first coeff}\\
w_{1}  &  =\frac{U_{\theta}^{\ast}}{\sin\varphi_{0}}J_{\varphi}^{\prime
}-U_{\varphi}^{\ast}\sin\varphi_{0}J_{\theta}^{\prime}-\frac{U_{\theta}^{\ast
}\cos\varphi_{0}}{g_{0}}J_{\varphi}^{2}+(\frac{U_{\theta\theta}^{\ast}}%
{\sin\varphi_{0}}-U_{\varphi\theta}^{\ast}\sin\varphi_{0})J_{\theta}%
^{2}\nonumber\\
&  +(\frac{U_{\theta\theta}^{\ast}}{\sin\varphi_{0}}-U_{\varphi\varphi}^{\ast
}\sin\varphi_{0}-U_{\varphi}^{\ast}\cos\varphi_{0})J_{\varphi}J_{\theta
}\nonumber\\
K_{0}  &  =J_{\theta}(1+J_{\varphi}^{2})\cos\varphi_{0}+(J_{\varphi}J_{\theta
}^{\prime}-J_{\theta}J_{\varphi}^{\prime})\sin\varphi_{0}\nonumber\\
K_{1}  &  =[-J_{\varphi}J_{\theta}(1+J_{\varphi}^{2})+J_{\varphi}J_{\theta
}^{\prime\prime}-J_{\theta}J_{\varphi}^{\prime\prime}]\sin\varphi
_{0}+[J_{\theta}^{\prime}(1+2J_{\varphi}^{2})+J_{\varphi}J_{\theta}J_{\varphi
}^{\prime}\ ]\cos\varphi_{0}\nonumber\\
\mathfrak{S}_{0}  &  =\frac{w_{0}}{K_{0}},\text{ }\mathfrak{S}_{1}=\frac
{K_{0}w_{1}-K_{1}w_{0}}{K_{0}^{2}}\nonumber
\end{align}
Clearly the order of a coefficient $a_{k}$ is $k$ plus the order of $a_{0}$,
and we note that $u_{0}$ has order $0$, $\tau_{0},v_{0}$ and $w_{0}$ have
order $1$, but $K_{0}$ has order $2$.

\subsubsection{Singularities of the shape curve}

A point $p=(\varphi_{0},\theta_{0})$ on the spherical curve $\gamma^{\ast}$ is
said to be \emph{regular} if $K^{\ast}\neq0$ and $U_{\nu}^{\ast}\neq0$ at $p$,
otherwise it is called \emph{singular, }and a curve with no regular point is
called\emph{ exceptional. }Note the geometric meaning of $U_{\nu}^{\ast}=0$,
namely the curve $\gamma^{\ast}$ is tangential to the gradient flow of
$U^{\ast}$ at $p$.

Now consider the time parametrization of $\gamma^{\ast}$. A point
$p=\gamma^{\ast}(t_{0})$ is a \emph{cusp }of the curve $\gamma^{\ast}(t)$ if
the speed $v$ vanishes at $t=t_{0}$. Assuming (for simplicity) that $K^{\ast}%
$is defined (or bounded) at $p$, it follow from (\ref{K*b})%
\[
\lim_{t\rightarrow t_{0}}\frac{U_{\nu}^{\ast}}{v}=\frac{1}{2}\omega\rho_{0},
\]
so $U_{\nu}^{\ast}=0$ but possibly $K^{\ast}\neq0$ at $p$. In particular, $p$
is a singular point. Conversely, by (\ref{K*b}) and assuming $\omega\neq0$, a
point where both $U_{\nu}^{\ast}$ and $K^{\ast}$ vanish must be a cusp. This
does not hold for $\omega=0$; for example, the shape curve of the figure eight
periodic motion (cf. \cite{CM}) passes through the Euler points on the equator
circle with $v\neq0$ and $K^{\ast}=$ $U_{\nu}^{\ast}=0$. $\ $

Examples of 3-body motions with exceptional shape curve arise from the
isosceles solutions of the 3-body problem, which are fairly well understood
(cf. e.g. \cite{Cabral}). The isosceles m-triangle has two equal masses at the
base, and the shapes $\gamma^{\ast}$ of these m-triangles constitute a
longitude circle on the shape sphere $M^{\ast}$, and hence $K^{\ast}$ vanishes
along the curve. In fact, $U_{\mathbf{\nu}}^{\ast}$ also vanishes on this
circle since it is a gradient line for $U^{\ast}$. We point out, however, an
isosceles triangle motion in the plane must have $\omega=0$, so in our study
these motions are excluded at the outset.

The shape curves of collinear motions, being confined to the equatorial circle
of $M^{\ast}$, satisfy $K^{\ast}=U_{\mathbf{\nu}}^{\ast}=0$, so they must be
regarded as exceptional. It is not difficult to see (purely kinematically)
that a collinear motion is planar, and is even confined to a fixed line if
$\omega=0$.

On the other hand, takining a closer look at the Newtonian case with
$\omega\neq0$, we observe that a\ collinear motion must have constant shape
and hence the shape curve is a single point, namely one of the three Euler
points. In fact, from the identity (\ref{K*b}) it follows that $v=0$ at each
point, so $\gamma^{\ast}$ must be a single point. The vanishing of $v$ also
follows directly from the ODE (\ref{NewtonRedu}), where in equation (ii) we
have, by assumption, $\varphi=\pi/2,\dot{\varphi}=0,U_{\varphi}^{\ast}=0$,
hence also $v=\dot{\theta}=0$. Then, from the ODE it follows that $U_{\varphi
}^{\ast}$ $=U_{\theta}^{\ast}=0$, namely the point is critical for $U^{\ast}$
and hence, by definition, is an Euler point.

\section{ A three-body motion is essentially determined by its geometric shape
curve}

The purpose of this section is to interpret properly and provide evidence for
the following:

\begin{conjecture}
The geometric shape curve $\gamma^{\ast}$ of a planar 3-body motion determines
the time parametrized moduli curve $\bar{\gamma}(t)$, and hence determines the
3-body motion modulo a fixed rotation of the plane.
\end{conjecture}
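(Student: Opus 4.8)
The plan is to reconstruct the temporal invariants of $\bar\gamma(t)$ from intrinsic data of $\gamma^{\ast}$ step by step, working at a regular point $p\in\gamma^{\ast}$ (which exists by hypothesis, since $\gamma^{\ast}$ is non-exceptional). The central objects are the three power series $\rho(t),\varphi(t),\theta(t)$ in (\ref{intcurve}); by analyticity of the ODE system (\ref{NewtonRedu}), it suffices to recover a finite initial segment of coefficients, namely the six numbers in (\ref{initial}), together with the parameter $\omega$ and the energy $h$. The key leverage is the curvature identity (\ref{K*}), equivalently (\ref{K*b}), which at $p$ reads $\rho^{3}v^{2}+2\omega\rho v/K^{\ast}=4\mathfrak{S}$. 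Here $K^{\ast}=K_0$ and $\mathfrak{S}=\mathfrak{S}_0$ are \emph{intrinsic} invariants of order $2$, readable off the geometric curve by (\ref{first coeff}). Thus at $p$ one equation links the three temporal unknowns $\rho_0$, $v_0$ and $\omega$.

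First I would extract more equations by differentiating (\ref{K*b}) with respect to arc-length $s$ and evaluating at $p$: each derivative introduces the next temporal coefficient ($\dot\rho_0$, then the order-$2$ data) while the right-hand sides stay intrinsic via the expansions (\ref{3.16}). Simultaneously I would use the energy integral (\ref{h-int}) at $t=0$, rewritten in the cone form (\ref{S7a}) as $\tfrac12\dot\rho^2+\tfrac{\rho^2}{8}v^2+\tfrac{\omega^2}{2\rho^2}-U^{\ast}/\rho=h$; since $U^{\ast}(p)=u_0$ is intrinsic, this is one more relation among $\rho_0,\dot\rho_0,v_0,\omega,h$. Together with the Lagrange–Jacobi equation (i) of (\ref{NewtonRedu}) and the ODE equations (ii),(iii) — which express $\ddot\varphi,\ddot\theta$, hence (via (\ref{Jprime1})) the order-$2$ intrinsic curvature data $J'_\varphi,J'_\theta$ — one assembles a finite algebraic system: the intrinsic unknowns (derivatives of $K^{\ast}$, $U_\nu^{\ast}$, $\mathfrak{S}$, the direction element, and $v_0,v_1$ insofar as these are forced by the geometry) are all known functions of the geometric curve, and the temporal unknowns $\rho_0,\dot\rho_0,\varphi_1,\theta_1,\varphi_2,\theta_2,\omega,h$ are to be solved for. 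The identities (\ref{3.4}) fix $(\varphi_1,\theta_1)=v_0(J_\varphi,J_\theta)$ once $v_0$ is known, so effectively the unknowns reduce to $\rho_0,\dot\rho_0,v_0,v_1,\omega$ (with $h$ and the order-$2$ data then determined), against which the curvature relation and its first $s$-derivative, the energy relation, and the ODE-derived relation for $v_1$ give enough equations.

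The remaining input is the \emph{orientation} of $\gamma^{\ast}$, which fixes the sign of $v_0$ and of $K^{\ast}$ relative to $U_\nu^{\ast}$; Remark \ref{scaling} shows the one-parameter time–size scaling symmetry $\Phi_k$ acts freely on the solution set with $H=h\omega^2$ and the unoriented shape curve as invariants, so without loss of generality one may normalize, e.g., $\rho_0=1$, solve the reduced system for the remaining quantities, and then restore the scaling. The main obstacle I anticipate is \emph{solvability and uniqueness} of this algebraic system: one must verify that at a regular point the Jacobian is nondegenerate — in particular that $K^{\ast}\neq0$ and $U_\nu^{\ast}\neq0$ (the regularity hypotheses) genuinely prevent the degenerations where (\ref{K*b}) fails to pin down $\omega$ or $\rho_0 v_0$ (as it does in the cusp case $K^{\ast}=U_\nu^{\ast}=0$, $\omega\neq0$, and in the $\omega=0$ figure-eight situation). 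Establishing that the constructed initial data $(\rho_0,\varphi_0,\theta_0;\rho_1,\varphi_1,\theta_1)$ and parameter $\omega$ feed back through (\ref{NewtonRedu}) to a solution whose shape curve is precisely the given $\gamma^{\ast}$ — i.e. closing the loop between "intrinsic data determines initial temporal data" and "that initial data regenerates the curve" — is where the real work lies, and is presumably handled by Lemma-type arguments (the lemma on the $\omega$-dependence of initial data) together with the analyticity already noted for the system (\ref{NewtonRedu}).
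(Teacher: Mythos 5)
Your outline coincides with the paper's actual strategy: the curvature identity (\ref{K*b}) evaluated at a regular point $p$, its first derivative along the curve (with the second-order temporal invariant $v_{1}$ eliminated through the reduction (\ref{v1x})), and the energy integral (\ref{h-int}) at $t=0$ are exactly the three equations $Eq1$, $Eq2$, $Eq3$ of the basic algebraic system (\ref{basic system}), whose unknowns are the basic temporal invariants $(\rho_{0},\rho_{1},v_{0})$; these, together with the direction element, reconstitute the initial data (\ref{initial}) and hence the analytic solution of (\ref{NewtonRedu}) (Proposition \ref{BasicProp}). So the architecture is right. But the proposal stops precisely where the paper's work begins, and the step you flag as ``where the real work lies'' does not resolve the way you anticipate. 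There is no nondegenerate-Jacobian argument yielding a unique solution: the paper eliminates $\rho_{0}$ and $\rho_{1}$ explicitly and reduces the system to a single polynomial equation (\ref{poly2}) of degree at most six in $Y=\omega\rho_{0}v_{0}$, whose admissible roots are in bijection with the moduli curves realizing the given local data (Proposition \ref{A}). There may be several such roots --- the worked examples produce two (the ``companion solutions''), whose shape curves share the basic 6-tuple (\ref{sixtuple}) at $p$ but differ globally --- so the data at a single point determines $\bar{\gamma}(t)$ only up to a finite ambiguity, and the paper itself says its determination holds ``with some ambiguity.'' A complete proof must either accept this finite ambiguity or invoke more of the geometric curve than its 2-jet relative to $\nabla U^{\ast}$ at one point; your proposal does neither.

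Second, your treatment of $(h,\omega)$ as unknowns to be recovered, with the scaling symmetry of Remark \ref{scaling} quotiented out by normalizing $\rho_{0}=1$, does not close. The scaling $\Phi_{k}$ preserves the unoriented geometric shape curve while sending $(h,\omega)\mapsto(k^{2/3}h,\,k^{-1/3}\omega)$, so the geometric curve can at best determine $H=h\omega^{2}$, and ``restoring the scaling'' requires exactly the external datum (the value of $\omega$, or equivalently of $\rho_{0}$) that the curve cannot supply; any system in which $h$ and $\omega$ both appear as unknowns is necessarily degenerate along the scaling orbit. The paper avoids this by building the energy-momentum level $(h,\omega)$ into the hypotheses of Theorem A and of the theorem in Section 3, and the conjecture must be read with that datum given.
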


This should hold in general, with some "obvious" exceptions. So, we are aiming
at an analytical reconstruction of the moduli curve $\bar{\gamma}(t)$, based
on purely geometric data concerning the shape curve $\gamma^{\ast}$ and its
interaction with the gradient field $\nabla U^{\ast}$. This is summarized as follows:

\begin{theorem}
Assume an oriented geometric curve $\gamma^{\ast}$ on the 2-sphere is
realizable as the shape curve of a planar 3-body motion $\gamma(t)$ at a given
energy-momentum level $(h,\omega)$. Then the relative geometry of
$(\gamma^{\ast},\nabla U^{\ast})$ on the shape sphere in the neighborhood of a
generic point $p$ of $\gamma^{\ast}$ yields the information needed to
determine the moduli curve $\bar{\gamma}(t)$ as a solution of the ODE
(\ref{NewtonRedu}).
\end{theorem}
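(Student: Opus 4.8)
The plan is to reconstruct the six initial data $\rho_0,\varphi_0,\theta_0;\rho_1,\varphi_1,\theta_1$ of (\ref{initial}) for the solution $\bar\gamma(t)$ of (\ref{NewtonRedu}) from intrinsic data of the pair $(\gamma^\ast,\nabla U^\ast)$ near $p$, after which analytic existence--uniqueness for (\ref{NewtonRedu}) reconstructs $\bar\gamma(t)$, and the standard lifting $M\to\bar M$ (set aside in this paper) recovers $\gamma(t)$ up to a fixed rotation. Two of the six data, $\varphi_0,\theta_0$, are just the coordinates of the chosen point $p\in\gamma^\ast$, hence intrinsic. By (\ref{3.4}) the remaining pair splits as $(\varphi_1,\theta_1)=v_0\,(J_\varphi,J_\theta)$, where the direction element $(J_\varphi,J_\theta)$ at $p$ is intrinsic to the oriented curve and $v_0=v(0)$ is the unknown shape speed. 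Thus the whole problem reduces to determining the three temporal invariants $\rho_0$, $\rho_1=\dot\rho(0)$ and $v_0$ from intrinsic quantities.

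To this end I would assemble three scalar equations in these three unknowns. The first is the energy integral (\ref{h-int}) at $t=0$, which, using $\dot\varphi^2+\sin^2\varphi\,\dot\theta^2=v^2$, reads
\[
h=\tfrac12\rho_1^2+\tfrac{\rho_0^2}{8}v_0^2+\tfrac{\omega^2}{2\rho_0^2}-\tfrac{u_0}{\rho_0},
\]
with $h,\omega$ given and $u_0=U^\ast(p)$ known. The second is the curvature identity (\ref{K*b}) at $p$,
\[
\rho_0^3v_0^2+\tfrac{2\omega}{K_0}\rho_0 v_0=4\mathfrak{S}_0,
\]
where $K_0=K^\ast(p)$ is intrinsic to the oriented $\gamma^\ast$ and $\mathfrak{S}_0=w_0/K_0$, $w_0=U_\nu^\ast(p)$, is intrinsic to $(\gamma^\ast,\nabla U^\ast)$; cf. (\ref{Siegel}), (\ref{first coeff}). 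The third equation comes from differentiating (\ref{K*b}) once with respect to arc-length and evaluating at $p$: this brings in $K_1,\mathfrak{S}_1,w_1$ (all intrinsic, by (\ref{3.16})--(\ref{first coeff})) together with $\rho'=\rho_1/v_0$ and $v'=v_1/v_0$. The decisive point is that $v_1$ --- a priori temporal of order $2$ --- is, through (\ref{v1}), a function of $\varphi_2=\tfrac12\ddot\varphi(0)$ and $\theta_2=\tfrac12\ddot\theta(0)$, and equations (ii),(iii) of (\ref{NewtonRedu}) express $\ddot\varphi(0),\ddot\theta(0)$ explicitly in terms of $\rho_0$, $\rho_1$, $\dot\varphi=J_\varphi v_0$, $\dot\theta=J_\theta v_0$, the known partials $U_\varphi^\ast,U_\theta^\ast$ at $p$, and $\omega$. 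After this substitution the differentiated curvature identity collapses to a third algebraic relation purely among $\rho_0,\rho_1,v_0$ and known data. (Each further $s$-derivative of (\ref{K*b}), treated the same way, gives one more such relation; only the first is needed at a generic point.)

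It then remains to solve this $3\times3$ algebraic system, which I would do by elimination: the quadratic (\ref{K*b}) gives $v_0$ as an explicit function of $\rho_0$ (the branch fixed by $v_0>0$ together with the signs of $K_0,\mathfrak{S}_0,\omega$); the energy integral then gives $\rho_1^2$, hence $\rho_1$ up to sign, as a function of $\rho_0$; and the differentiated identity --- which depends on $\rho_1$ through the $\dot\rho/\rho$ terms of $\ddot\varphi(0),\ddot\theta(0)$ and through $\rho'$, so that flipping the sign of $\rho_1$ genuinely alters the oriented shape curve --- both fixes that remaining sign and reduces to a single equation for $\rho_0$ alone. Since $\gamma^\ast$ is assumed realizable the system is consistent, so the only real issue is uniqueness of the admissible solution.

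The hard part is precisely this: showing that at a generic point $p$ the reduced scalar equation for $\rho_0$ has a unique admissible root --- equivalently, that the Jacobian of the three-equation system in $(\rho_0,\rho_1,v_0)$ is nonzero there. This is where the genericity hypothesis must be made quantitative: beyond $p$ being a regular point in the sense of Section 2.4.4 (so $K_0\neq0$ and $U_\nu^\ast(p)\neq0$, keeping $\mathfrak{S}_0$ and $v_0$ well-defined and nonzero), one needs a finite list of further open conditions --- non-vanishing of certain combinations of $K_0,K_1,w_0,w_1,\mathfrak S_0,\mathfrak S_1$ and the partials of $U^\ast$ at $p$ --- that fail only on the exceptional shape curves of Section 2.4.4. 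Verifying that these conditions indeed force a nonsingular Jacobian, and that they hold off the listed exceptional curves, is the crux of the argument; once it is in place, analytic dependence of the solutions of (\ref{NewtonRedu}) on their initial data yields $\bar\gamma(t)$, completing the proof and, with it, Theorems A and B.
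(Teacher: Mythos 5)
Your setup coincides with the paper's: you reduce the six initial data to the direction element $(J_{\varphi},J_{\theta})_{p}$ plus the triple $(\rho_{0},\rho_{1},v_{0})$ exactly as in (\ref{1st reduction}) and Proposition \ref{BasicProp}, and your three equations --- the curvature identity (\ref{K*b}) at $p$, its derivative along the curve with $v_{1}$ reduced through the ODE, and the energy integral (\ref{h-int}) --- are precisely $Eq1$, $Eq2$ and $Eq3$ of the basic algebraic system (\ref{basic system}), up to the order of elimination. (Incidentally, $Eq2$ is linear in $\rho_{1}$, so $\rho_{1}$ is determined outright from $\rho_{0},v_{0}$; you do not need to extract it as a square root from the energy integral and then resolve a sign.)

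The genuine gap is in the final step, which you defer and which is in fact false as you have framed it. You make the whole argument hinge on showing that at a generic $p$ the reduced scalar equation has a \emph{unique} admissible root, to be certified by a nonsingular Jacobian. A nonsingular Jacobian only isolates roots, it does not make them unique; and, decisively, uniqueness fails: after elimination the system collapses to a polynomial of degree $\leq6$ in $y=\rho_{0}v_{0}$ (equations (\ref{poly1})--(\ref{poly2})), and the paper exhibits explicit data --- including the concrete Henon 2 motion of Section 3.4.5 --- with \emph{two} admissible solutions $(\rho_{0},\rho_{1},v_{0})$, hence two distinct moduli curves (``companion solutions'') sharing the same direction element and the same basic 6-tuple (\ref{sixtuple}) at $p$. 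So the statement you isolate as the crux is contradicted by the paper's own examples, and no list of open genericity conditions will rescue it. What is actually needed, and what the paper does, is (a) the reduction to the polynomial equation, which bounds the number of admissible candidates by six, and (b) the observation that the hypothesis gives you the geometric shape curve in a whole \emph{neighborhood} of $p$, not merely a finite jet: distinct admissible solutions generate distinct analytic shape curves agreeing only to the given finite order at $p$, so the neighborhood data singles out the correct candidate from the finite list. Your proposal, as written, neither establishes its claimed uniqueness nor falls back on this finite-ambiguity argument, so it does not close.
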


The proof will be elaborated in the following subsections, through local
analysis which eventually reduces the problem to solving an algebraic system
(\ref{xyz}) involving three equations and three variables. This system depends
only on intrinsic invariants, and it is the solution of this system, with some
ambiguity, which enables us to determine the initial data (\ref{initial})
which generate the curve $\bar{\gamma}(t)$ as a solution of the system
(\ref{NewtonRedu}).

\subsection{Explicit calculation of the moduli curve as an initial value
problem}

Since the ODE system (\ref{NewtonRedu}) is analytic, one can develop
recursively the power series expansion of the solution $\bar{\gamma}%
(t)=(\rho(t),\varphi(t),\theta(t))$ by the method of undetermined
coefficients. Namely, starting from the initial data set (\ref{initial}), the
temporal invariants $\rho_{k},\varphi_{k},\theta_{k}$ of order $k\geq2$ are
calculated recursively and expressed in terms of temporal invariants of lower order.

More specifically, repeated differentiation of the system (\ref{NewtonRedu})
yields identities of type
\begin{align}
E_{1,n}  &  :0=(n+2)(n+1)\rho_{0}^{2}\rho_{n+2}+.....\nonumber\\
E_{2,n}  &  :0=(n+2)(n+1)\rho_{0}^{3}\varphi_{n+2}+.....\label{En}\\
E_{3,n}  &  :0=(n+2)(n+1)g_{0}\rho_{0}^{3}\theta_{n+2}+....\nonumber
\end{align}
Thus, the identity $E_{i,n}$ expresses a unique temporal invariant of highest
order $n+2$ in terms of lower order invariants. This procedure starts with the
case $n=0$ which yields the following identities stated for convenience
\begin{align}
E_{10}  &  :0=2\rho_{0}^{2}\rho_{2}+\rho_{0}\rho_{1}^{2}-2h\rho_{0}%
-u_{0}\nonumber\\
E_{20}  &  :0=2\rho_{0}^{3}\varphi_{2}+2\rho_{0}^{2}\rho_{1}\varphi
_{1}-2\omega(\sin\varphi_{0})\rho_{0}\theta_{1}-\frac{1}{2}f_{0}\rho_{0}%
^{3}\theta_{1}^{2}-4\mu_{0}\label{E0}\\
E_{30}  &  :0=2g_{0}\rho_{0}^{3}\theta_{2}+2g_{0}\rho_{0}^{2}\rho_{1}%
\theta_{1}+2\omega(\sin\varphi_{0})\rho_{0}\varphi_{1}+f_{0}\rho_{0}%
^{3}\varphi_{1}\theta_{1}-4\eta_{0}\nonumber
\end{align}

However, we want to determine the initial data (\ref{initial}) solely in terms
of intrinsic invariants and in the simplest way. A natural first step in this
direction is the replacement of the six-tuple (\ref{initial}) by another
equivalent "six-tuple"
\begin{equation}
\lbrack\rho_{0},\varphi_{0},\theta_{0};\rho_{1},\varphi_{1},\theta
_{1}]\longleftrightarrow\lbrack\rho_{0},\rho_{1},v_{0};(J_{\varphi},J_{\theta
})_{p}], \label{1st reduction}%
\end{equation}
namely the initial data (\ref{initial}) is replaced by the temporal invariants
$\rho_{0},\rho_{1},v_{0}$ and the direction element of $\gamma^{\ast}$ at
$p=(\varphi_{0},\theta_{0})$, the latter being an intrinsic invariant. This is
so because the pair $(\varphi_{1},\theta_{1})$, representing the velocity of
the shape curve $\gamma^{\ast}(t)$ at $p$, is determined by the initial speed
$v_{0}$ and direction $(J_{\varphi},J_{\theta})$ at $p$. The triple $(\rho
_{0},\rho_{1},v_{0})$ shall be referred to as the \emph{basic temporal
invariants.}

For convenience, we state the following preliminary observation:

\begin{proposition}
\label{BasicProp} For a given direction element $(J_{\varphi},J_{\theta})_{p}$
and basic temporal invariants $(\rho_{0},\rho_{1},v_{0})$, we can determine
the initial data set (\ref{initial}) and hence calculate successively, using
the equations (\ref{En}), the power series expansion of the time parametrized
moduli curve $\bar{\gamma}(t)$. In other words, the direction element and the
triple $(\rho_{0},\rho_{1},v_{0})$ determine a unique solution $\bar{\gamma
}(t)$ of the ODE system (\ref{NewtonRedu}).
\end{proposition}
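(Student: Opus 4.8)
The plan is to establish the correspondence in (\ref{1st reduction}) in both directions and then invoke the analyticity of the ODE system. The statement of the proposition is essentially a bookkeeping result: it reorganizes the six-tuple of initial data (\ref{initial}) into three temporal invariants $(\rho_0,\rho_1,v_0)$ plus the intrinsic direction element $(J_\varphi,J_\theta)_p$, and then asserts that the power series machinery of (\ref{En}) runs. So the proof has two logically separate parts.

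First I would verify the equivalence (\ref{1st reduction}). The point $p=(\varphi_0,\theta_0)$ is recorded by the direction element by convention, so $\varphi_0,\theta_0$ are available. Given the speed $v_0$ and the unit direction $(J_\varphi,J_\theta)$ satisfying the normalization $J_\varphi^2+\sin^2(\varphi_0)J_\theta^2=1$ from (\ref{3.4}), the identities $\dot\varphi=J_\varphi v$, $\dot\theta=J_\theta v$ in (\ref{3.4}) give $\varphi_1=J_\varphi v_0$ and $\theta_1=J_\theta v_0$, so the pair $(\varphi_1,\theta_1)$ is recovered; conversely $v_0=\sqrt{\varphi_1^2+g_0\theta_1^2}$ by (\ref{v0}) and the direction element is $(\varphi_1/v_0,\theta_1/v_0)$ (with a degenerate case $v_0=0$ to be set aside, consistent with the genericity hypotheses elsewhere in the paper). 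Together with $\rho_0,\rho_1$ carried over unchanged, this shows the two six-tuples determine each other. Note that $h$ is \emph{not} part of the data in either description: once $\omega$ is fixed as a parameter, (\ref{h-int}) evaluated at $t=0$ computes $h$ from $(\rho_0,\rho_1,\varphi_0,\theta_0,\varphi_1,\theta_1)$, hence from the reorganized data as well.

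Second, with the full initial data (\ref{initial}) in hand, I would appeal to the fact stated at the start of Section 2.3 that (\ref{NewtonRedu}) is an analytic ODE system of total order $5$, augmented by the first integral (\ref{h-int}). By the Cauchy–Kovalevskaya / Picard analyticity theorem, prescribing $(\rho_0,\varphi_0,\theta_0;\rho_1,\varphi_1,\theta_1)$ determines a unique analytic solution $\bar\gamma(t)=(\rho(t),\varphi(t),\theta(t))$, whose Taylor coefficients $\rho_k,\varphi_k,\theta_k$ for $k\ge 2$ are obtained by the method of undetermined coefficients: differentiating (\ref{NewtonRedu}) $n$ times and evaluating at $t=0$ yields the relations $E_{1,n},E_{2,n},E_{3,n}$ of (\ref{En}), in which the top-order coefficient appears with the nonvanishing coefficient $(n+2)(n+1)\rho_0^2$ (resp. $\rho_0^3$, resp. $g_0\rho_0^3$) and everything else is lower order. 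The base case $n=0$ is spelled out in (\ref{E0}). Hence the expansion is generated recursively, which is exactly the assertion.

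I do not expect a genuine obstacle here; the proposition is a preparatory normalization step rather than a theorem with real content. The only points needing care are the nondegeneracy conditions implicit in the recursion — namely $\rho_0\neq 0$ and $g_0=\sin^2\varphi_0\neq 0$, i.e. $p$ is not a total collision and not on the polar axis — and the fact that $v_0\neq 0$ is needed for the direction element to be well defined; all of these are covered by the ``generic point'' hypothesis under which the surrounding theorem operates, so I would simply note them. The substantive work of expressing $(\rho_0,\rho_1,v_0)$ themselves in terms of purely intrinsic invariants of $\gamma^{\ast}$ is deferred to the later subsections (culminating in the algebraic system (\ref{xyz})), and is not claimed by this proposition.
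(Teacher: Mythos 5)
Your proof is correct and follows essentially the same route as the paper, which treats this as a ``preliminary observation'': the equivalence (\ref{1st reduction}) via $(\varphi_1,\theta_1)=(J_\varphi v_0, J_\theta v_0)$ from (\ref{3.4}), the determination of $h$ from $\omega$ and the initial data via (\ref{h-int}), and the recursion (\ref{En}) with nonvanishing leading coefficients generating the analytic solution. Your explicit flagging of the nondegeneracy conditions $\rho_0\neq 0$, $g_0\neq 0$, $v_0\neq 0$ is a harmless and sensible addition.
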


\subsubsection{Reduction of order}

The\ notion of order of an invariant, as defined in Section 2.4.1, makes
little sense when the coordinate functions $\rho(t),\varphi(t),\theta(t)$
solve the ODE system (\ref{NewtonRedu}). Indeed, by means of this system a
differential invariant $Q$ of order 2 reduces its order to 1 by substituting
the expressions for $\ddot{\rho},\ddot{\varphi},\ddot{\theta}$ into $Q$.
Furthermore, expressions for higher order derivatives of $\rho,\varphi,\theta$
are found by successive differentiation of the differential equations, and so
they can be substituted into higher order invariants to reduce their total
order. Thus, any given differential invariant $Q$ of order $n$ can be reduced
stepwise to a functional expression of order $\leq1$. Then by evaluation at
$t=0$ the resulting terms involve only temporal invariants of order $\leq1$,
possibly also intrinsic invariants of order $0$, that is, the function
$U^{\ast}$ or its partial derivatives (of any "order") evaluated at $p.$

A \underline{complete} \underline{reduction} of an invariant is achieved when
all invariants in the reduced expression have order $\leq1$. Let us work out
complete reductions of the basic curvature invariants $K_{0},K_{1},w_{1}$
listed in (\ref{first coeff}). First of all, recall that the expression for
$K_{0}$ in (\ref{first coeff}) follows directly from its definition, as an
intrinsic invariant of order 2. However, its complete reduction follows
immediately from (\ref{K*}) or (\ref{K*b}), so the following two expressions
for $K_{0}$ must be equal
\begin{equation}
K_{0}=\frac{-2}{\rho_{0}^{2}v_{0}}(\omega-\frac{2w_{0}}{\rho_{0}v_{0}%
})=J_{\theta}(1+J_{\varphi}^{2})\cos\varphi_{0}+(J_{\varphi}J_{\theta}%
^{\prime}-J_{\theta}J_{\varphi}^{\prime})\sin\varphi_{0}, \label{K0}%
\end{equation}
which can also be verified directly by reducing the 2nd order invariants
$J_{\varphi}^{\prime}$ and $J_{\theta}^{\prime}$, see below.

\begin{lemma}
The 2nd order temporal invariant $v_{1}$ has the following complete
reduction\ \ \ \ \ \ \ \ \ \ \ \ \ \ \ \ \ \ \ \ \ \ \ \ \ \ \ \ \ \ \ \ \ \ \ \ \ \ \ \ \ \ \ \ \ \ \ \ \ \ \ \ \ \ \ \ \ \ \ \ \ \ \ \ \ \ \ \ \ \ \ \ \ \ \ \ \ \ \ \ \ \ \ \ \ \ \ \ \ \ \ \ \ \ \ \ \ \ \ \ \ \ \ \ \ \ \ \ \ \ \ \ \ \ \ \ \ \ \ \ \ \ \ \ \ \ \ \ \ \ \ \ \ \ \ \ \ \ \ \ \ \ \ \ \ \ \ \ \ \ \ \ \ \ \ \ \ \ \ \ \ \ \ \ \ \ \ \ \ \ \ \ \ \ \ \ \ \ \ \ \ \ \ \ \ \ \ \ \ \ \ \ \ \ \ \ \ \ \ \ \ \ \ \ \ \ \ \ \ \ \ \ \ \ \ \ \ \ \ \ \ \ \ \ \ \ \ \ \ \ \ \ \ \ \ \ \ \ \ \ \ \ \ \ \ \ \ \ \ \ \ \ \ \ \ \ \ \ \ \ \ \ \ \ \ \ \ \ \ \ \ \ \ \ \ \ \ \ \ \ \ \ \ \ \ \ \ \ \ \ \ \ \ \ \ \ \ \ \ \ \ \ \ \ \ \ \ \ \ \ \ \ \ \ \ \ \ \ \ \ \ \ \ \ \ \ \ \ \ \ \ \ \ \ \ \ \ \ \ \ \ \ \ \ \ \ \ \ \ \ \ \ \ \ \ \ \ \ \ \ \ \ \ \ \ \ \ \ \ \ \ \ \ \ \ \ \ \ \ \ \ \ \ \ \ \ \ \ \ \ \ \ \ \ \ \ \ \ \ \ \ \ \ \ \ \ \ \ \ \ \ \ \ \ \ \ \ \ \ \ \ \ \ \ \ \ \ \ \ \ \ \ \ \ \ \ \ \ \ \ \ \ \ \ \ \ \ \ \ \ \ \ \ \ \ \ \ \ \ \ \ \ \ \ \ \ \ \ \ \ \ \ \ \ \ \ \ \ \ \ \ \ \ \ \ \ \ \ \ \ \ \ \ \ \ \ \ \ \ \ \ \ \ \ \ \ \ \ \ \ \ \ \ \ \ \ \ \ \ \ \ \ \ \ \ \ \ \ \ \ \ \ \ \ \ \ \ \ \ \ \ \ \ \ \ \ \ \ \ \ \ \ \ \ \ \ \ \ \ \ \ \ \ \ \ \ \ \ \ \ \ \ \ \ \ \ \ \ \ \ \ \ \ \ \ \ \ \ \ \ \ \ \ \ \ \ \ \ \ \ \ \ \ \ \ \ \ \ \ \ \ \ \ \ \ \ \ \ \ \ \ \ \ \ \ \ \ \ \ \ \ \ \ \ \ \ \ \ \ \ \ \ \ \ \ \ \ \ \ \ \ \ \ \ \ \ \ \ \ \ \ \ \ \ \ \ \ \ \ \ \ \ \ \ \ \ \ \ \ \ \ \ \ \ \ \ \ \ \ \ \ \ \ \ \ \ \ \ \ \ \ \ \ \ \ \ \ \ \ \ \ \ \ \ \ \ \ \ \ \ \ \ \ \ \ \ \ \ \ \ \ \ \ \ \ \ \ \ \ \ \ \ \ \ \ \ \ \ \ \ \ \ \ \ \ \ \ \ \ \ \ \ \ \ \ \ \ \ \ \ \ \ \ \ \ \ \ \ \ \ \ \ \ \ \ \ \ \ \ \ \ \ \ \ \ \ \ \ \ \ \ \ \ \ \ \ \ \ \ \ \ \ \ \ \ \ \ \ \ \ \ \ \ \ \ \ \ \ \ \ \ \ \ \ \ \ \ \ \ \ \ \ \ \ \ \ \ \ \ \ \ \ \ \ \ \ \ \ \ \ \ \ \ \ \ \ \ \ \ \ \ \ \ \ \ \ \ \ \ \ \ \ \ \ \ \ \ \ \ \ \ \ \ \ \ \ \ \ \ \ \ \ \ \ \ \ \ \ \ \ \ \ \ \ \ \ \ \ \ \ \ \ \ \ \ \ \ \ \ \ \ \ \ \ \ \ \ \ \ \ \ \ \ \ \ \ \ \ \ \ \ \ \ \ \ \ \ \ \ \ \ \ \ \ \ \ \ \ \ \ \ \ \ \ \ \ \ \ \ \ \ \ \ \ \ \ \ \ \ \ \ \ \ \ \ \ \ \ \ \ \ \ \ \ \ \ \ \ \ \ \ \ \ \ \ \ \ \ \ \ \ \ \ \ \ \ \ \ \ \ \ \ \ \ \ \ \ \ \ \ \ \ \ \ \ \ \ \ \ \ \ \ \ \ \ \ \ \ \ \ \ \ \ \ \ \ \ \ \ \ \ \ \ \ \ \ \ \ \ \ \ \ \ \ \ \ \ \ \ \ \ \ \ \ \ \ \ \ \ \ \ \ \ \ \ \ \ \ \ \ \ \ \ \ \ \ \ \ \ \ \ \ \ \ \ \ \ \ \ \ \ \ \ \ \ \ \ \ \ \ \ \ \ \ \ \ \ \ \ \ \ \ \ \ \ \ \ \ \ \ \ \ \ \ \ \ \ \ \ \ \ \ \ \ \ \ \ \ \ \ \ \ \ \ \ \ \ \ \ \ \ \ \ \ \ \ \ \ \ \ \ \ \ \ \ \ \ \ \ \ \ \ \ \ \ \ \ \ \ \ \ \ \ \ \ \ \ \ \ \ \ \ \ \ \ \ \ \ \ \ \ \ \ \ \ \ \ \ \ \ \ \ \ \ \ \ \ \ \ \ \ \ \ \ \ \ \ \ \ \ \ \ \ \ \ \ \ \ \ \ \ \ \ \ \ \ \ \ \ \ \ \ \ \ \ \ \ \ \ \ \ \ \ \ \ \ \ \ \ \ \ \ \ \ \ \ \ \ \ \ \ \ \ \ \ \ \ \ \ \ \ \ \ \ \ \ \ \ \ \ \ \ \ \ \ \ \ \ \ \ \ \ \ \ \ \ \ \ \ \ \ \ \ \ \ \ \ \ \ \ \ \ \ \ \ \ \ \ \ \ \ \ \ \ \ \ \ \ \ \ \ \ \ \ \ \ \ \ \ \ \ \ \ \ \ \ \ \ \ \ \ \ \ \ \ \ \ \ \ \ \ \ \ \ \ \ \ \ \ \ \ \ \ \ \ \ \ \ \ \ \ \ \ \ \ \ \ \ \ \ \ \ \ \ \ \ \ \ \ \ \ \ \ \ \ \ \ \ \ \ \ \ \ \ \ \ \ \ \ \ \ \ \ \ \ \ \ \ \ \ \ \ \ \ \ \ \ \ \ \ \ \ \ \ \ \ \ \ \ \ \ \ \ \ \ \ \ \ \ \ \ \ \ \ \ \ \ \ \ \ \ \ \ \ \ \ \ \ \ \ \ \ \ \ \ \ \ \ \ \ \ \ \ \ \ \ \ \ \ \ \ \ \ \ \ \ \ \ \ \ \ \ \ \ \ \ \ \ \ \ \ \ \ \ \ \ \ \ \ \ \ \ \ \ \ \ \ \ \ \ \ \ \ \ \ \ \ \ \ \ \ \ \ \ \ \ \ \ \ \ \ \ \ \ \ \ \ \ \ \ \ \ \ \ \ \ \ \ \ \ \ \ \ \ \ \ \ \ \ \ \ \ \ \ \ \ \ \ \ \ \ \ \ \ \ \ \ \ \ \ \ \ \ \ \ \ \ \ \ \ \ \ \ \ \ \ \ \ \ \ \ \ \ \ \ \ \ \ \ \ \ \ \ \ \ \ \ \ \ \ \ \ \ \ \ \ \ \ \ \ \ \ \ \ \ \ \ \ \ \ \ \ \ \ \ \ \ \ \ \ \ \ \ \ \ \ \ \ \ \ \ \ \ \ \ \ \ \ \ \ \ \ \ \ \ \ \ \ \ \ \ \ \ \ \ \ \ \ \ \ \ \ \ \ \ \ \ \ \ \ \ \ \ \ \ \ \ \ \ \ \ \ \ \ \ \ \ \ \ \ \ \ \ \ \ \ \ \ \ \ \ \ \ \ \ \ \ \ \ \ \ \ \ \ \ \ \ \ \ \ \ \ \ \ \ \ \ \ \ \ \ \ \ \ \ \ \ \ \ \ \ \ \ \ \ \ \ \ \ \ \ \ \ \ \ \ \ \ \ \ \ \ \ \ \ \ \ \ \ \ \ \ \ \ \ \ \ \ \ \ \ \ \ \ \ \ \ \ \ \ \ \ \ \ \ \ \ \ \ \ \ \ \ \ \ \ \ \ \ \ \ \ \ \ \ \ \ \ \ \ \ \ \ \ \ \ \ \ \ \ \ \ \ \ \ \ \ \ \ \ \ \ \ \ \ \ \ \ \ \ \ \ \ \ \ \ \ \ \ \ \ \ \ \ \ \ \ \ \ \ \ \ \ \ \ \ \ \ \ \ \ \ \ \ \ \ \ \ \ \ \ \ \ \ \ \ \ \ \ \ \ \ \ \ \ \ \ \ \ \ \ \ \ \ \ \ \ \ \ \ \ \ \ \ \ \ \ \ \ \ \ \ \ \ \ \ \ \ \ \ \ \ \ \ \ \ \ \ \ \ \ \ \ \ \ \ \ \ \ \ \ \ \ \ \ \ \ \ \ \ \ \ \ \ \ \ \ \ \ \ \ \ \ \ \ \ \ \ \ \ \ \ \ \ \ \ \ \ \ \ \ \ \ \ \ \ \ \ \ \ \ \ \ \ \ \ \ \ \ \ \ \ \ \ \ \ \ \ \ \ \ \ \ \ \ \ \ \ \ \ \ \ \ \ \ \ \ \ \ \ \ \ \ \ \ \ \ \ \ \ \ \ \ \ \ \ \ \ \ \ \ \ \ \ \ \ \ \ \ \ \ \ \ \ \ \ \ \ \ \ \ \ \ \ \ \ \ \ \ \ \ \ \ \ \ \ \ \ \ \ \ \ \ \ \ \ \ \ \ \ \ \ \ \ \ \ \ \ \ \ \ \ \ \ \ \ \ \ \ \ \ \ \ \ \ \ \ \ \ \ \ \ \ \ \ \ \ \ \ \ \ \ \ \ \ \ \ \ \ \ \ \ \ \ \ \ \ \ \ \ \ \ \ \ \ \ \ \ \ \ \ \ \ \ \ \ \ \ \ \ \ \ \ \ \ \ \ \ \ \ \ \ \ \ \ \ \ \ \ \ \ \ \ \ \ \ \ \ \ \ \ \ \ \ \ \ \ \ \ \ \ \ \ \ \ \ \ \ \ \ \ \ \ \ \ \ \ \ \ \ \ \ \ \ \ \ \ \ \ \ \ \ \ \ \ \ \ \ \ \ \ \ \ \ \ \ \ \ \ \ \ \ \ \ \ \ \ \ \ \ \ \ \ \ \ \ \ \ \ \ \ \ \ \ \ \ \ \ \ \ \ \ \ \ \ \ \ \ \ \ \ \ \ \ \ \ \ \ \ \ \ \ \ \ \ \ \ \ \ \ \ \ \ \ \ \ \ \ \ \ \ \ \ \ \ \ \ \ \ \ \ \ \ \ \ \ \ \ \ \ \ \ \ \ \ \ \ \ \ \ \ \ \ \ \ \ \ \ \ \ \ \ \ \ \ \ \ \ \ \ \ \ \ \ \ \ \ \ \ \ \ \ \ \ \ \ \ \ \ \ \ \ \ \ \ \ \ \ \ \ \ \ \ \ \ \ \ \ \ \ \ \ \ \ \ \ \ \ \ \ \ \ \ \ \ \ \ \ \ \ \ \ \ \ \ \ \ \ \ \ \ \ \ \ \ \ \ \ \ \ \ \ \ \ \ \ \ \ \ \ \ \ \ \ \ \ \ \ \ \ \ \ \ \ \ \ \ \ \ \ \ \ \ \ \ \ \ \ \ \ \ \ \ \ \ \ \ \ \ \ \ \ \ \ \ \ \ \ \ \ \ \ \ \ \ \ \ \ \ \ \ \ \ \ \ \ \ \ \ \ \ \ \ \ \ \ \ \ \ \ \ \ \ \ \ \ \ \ \ \ \ \ \ \ \ \ \ \ \ \ \ \ \ \ \ \ \ \ \ \ \ \ \ \ \ \ \ \ \ \ \ \ \ \ \ \ \ \ \ \ \ \ \ \ \ \ \ \ \ \ \ \ \ \ \ \ \ \ \ \ \ \ \ \ \ \ \ \ \ \ \ \ \ \ \ \ \ \ \ \ \ \ \ \ \ \ \ \ \ \ \ \ \ \ \ \ \ \ \ \ \ \ \ \ \ \ \ \ \ \ \ \ \ \ \ \ \ \ \ \ \ \ \ \ \ \ \ \ \ \ \ \ \ \ \ \ \ \ \ \ \ \ \ \ \ \ \ \ \ \ \ \ \ \ \ \ \ \ \ \ \ \ \ \ \ \ \ \ \ \ \ \ \ \ \ \ \ \ \ \ \ \ \ \ \ \ \ \ \ \ \ \ \ \ \ \ \ \ \ \ \ \ \ \ \ \ \ \ \ \ \ \ \ \ \ \ \ \ \ \ \ \ \ \ \ \ \ \ \ \ \ \ \ \ \ \ \ \ \ \ \ \ \ \ \ \ \ \ \ \ \ \ \ \ \ \ \ \ \ \ \ \ \ \ \ \ \ \ \ \ \ \ \ \ \ \ \ \ \ \ \ \ \ \ \ \ \ \ \ \ \ \ \ \ \ \ \ \ \ \ \ \ \ \ \ \ \ \ \ \ \ \ \ \ \ \ \ \ \ \ \ \ \ \ \ \ \ \ \ \ \ \ \ \ \ \ \ \ \ \ \ \ \ \ \ \ \ \ \ \ \ \ \ \ \ \ \ \ \ \ \ \ \ \ \ \ \ \ \ \ \ \ \ \ \ \ \ \ \ \ \ \ \ \ \ \ \ \ \ \ \ \ \ \ \ \ \ \ \ \ \ \ \ \ \ \ \ \ \ \ \ \ \ \ \ \ \ \ \ \ \ \ \ \ \ \ \ \ \ \ \ \ \ \ \ \ \ \ \ \ \ \ \ \ \ \ \ \ \ \ \ \ \ \ \ \ \ \ \ \ \ \ \ \ \ \ \ \ \ \ \ \ \ \ \ \ \ \ \ \ \ \ \ \ \ \ \ \ \ \ \ \ \ \ \ \ \ \ \ \ \ \ \ \ \ \ \ \ \ \ \ \ \ \ \ \ \ \ \ \ \ \ \ \ \ \ \ \ \ \ \ \ \ \ \ \ \ \ \ \ \ \ \ \ \ \ \ \ \ \ \ \ \ \ \ \ \ \ \ \ \ \ \ \ \ \ \ \ \ \ \ \ \ \ \ \ \ \ \ \ \ \ \ \ \ \ \ \ \ \ \ \ \ \ \ \ \ \ \ \ \ \ \ \ \ \ \ \ \ \ \ \ \ \ \ \ \ \ \ \ \ \ \ \ \ \ \ \ \ \ \ \ \ \ \ \ \ \ \ \ \ \ \ \ \ \ \ \ \ \ \ \ \ \ \ \ \ \ \ \ \ \ \ \ \ \ \ \ \ \ \ \ \ \ \ \ \ \ \ \ \ \ \ \ \ \ \ \ \ \ \ \ \ \ \ \ \ \ \ \ \ \ \ \ \ \ \ \ \ \ \ \ \ \ \ \ \ \ \ \ \ \ \ \ \ \ \ \ \ \ \ \ \ \ \ \ \ \ \ \ \ \ \ \ \ \ \ \ \ \ \ \ \ \ \ \ \ \ \ \ \ \ \ \ \ \ \ \ \ \ \ \ \ \ \ \ \ \ \ \ \ \ \ \ \ \ \ \ \ \ \ \ \ \ \ \ \ \ \ \ \ \ \ \ \ \ \ \ \ \ \ \ \ \ \ \ \ \ \ \ \ \ \ \ \ \ \ \ \ \ \ \ \ \ \ \ \ \ \ \ \ \ \ \ \ \ \ \ \ \ \ \ \ \ \ \ \ \ \ \ \ \ \ \ \ \ \ \ \ \ \ \ \ \ \ \ \ \ \ \ \ \ \ \ \ \ \ \ \
\begin{equation}
v_{1}=\frac{2}{\rho_{0}^{3}}(-\rho_{0}^{2}v_{0}\rho_{1}+2u_{1}) \label{v1x}%
\end{equation}

\end{lemma}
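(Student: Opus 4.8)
The plan is to start from the explicit expression (\ref{v1}) for $v_1$, which still contains the second-order temporal invariants $\varphi_2$ and $\theta_2$, and to eliminate these using the reduced Newton equations (\ref{NewtonRedu}) in exactly the form already recorded at $t=0$ as the identities $E_{20}$ and $E_{30}$ of (\ref{E0}). Concretely, I solve $E_{20}$ for $2\rho_0^3\varphi_2$ and $E_{30}$ for $2g_0\rho_0^3\theta_2$, multiply the first identity by $\varphi_1$ and the second by $\theta_1$, and assemble these together with the remaining term $\tfrac{f_0}{2}\varphi_1\theta_1^2$ so as to reconstruct $\rho_0^3 v_0 v_1$ from the bracket in (\ref{v1}).

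The crucial point is that two batches of terms cancel identically in this sum. The Coriolis contributions enter $E_{20}$ as $-2\omega(\sin\varphi_0)\rho_0\theta_1$ and $E_{30}$ as $+2\omega(\sin\varphi_0)\rho_0\varphi_1$, so after multiplying by $\varphi_1$ and $\theta_1$ respectively they produce $\pm 2\omega(\sin\varphi_0)\rho_0\varphi_1\theta_1$ and annihilate each other; similarly the $f_0=\sin 2\varphi_0$ terms combine to zero once the contribution $\tfrac{f_0}{2}\varphi_1\theta_1^2$ already present in (\ref{v1}) is taken into account (the coefficients add as $\tfrac12+\tfrac12-1=0$). What remains is
$$\rho_0^3 v_0 v_1 = -2\rho_0^2\rho_1\bigl(\varphi_1^2 + g_0\theta_1^2\bigr) + 4\bigl(\mu_0\varphi_1 + \eta_0\theta_1\bigr).$$

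To finish I invoke two bookkeeping identities that turn the right-hand side into a genuinely intrinsic expression of order $\le 1$. First, $\varphi_1^2 + g_0\theta_1^2 = v_0^2$ by (\ref{v0}). Second, $\mu_0\varphi_1 + \eta_0\theta_1$ is precisely $\grave u_1$, the first time-Taylor coefficient of $U^\ast$ in (\ref{3.17}), since $\grave u_1 = U_\varphi^\ast\dot\varphi + U_\theta^\ast\dot\theta$ at $t=0$; and by the chain rule $d/dt = v\,(d/ds)$ one has $\grave u_1 = v_0 u_1$, where $u_1=\tau_0$ is the arc-length coefficient of $U^\ast$ in (\ref{3.16}). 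Substituting both and dividing through by $v_0\rho_0^3$ gives exactly $v_1 = \tfrac{2}{\rho_0^3}\bigl(-\rho_0^2 v_0\rho_1 + 2u_1\bigr)$, the claimed formula (\ref{v1x}); since $\rho_0$ has order $0$ and $\rho_1,v_0,u_1$ all have order $1$, this is a complete reduction.

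The only real obstacle is arithmetic vigilance: the cancellations of the Coriolis and $f_0$ terms are what make the statement true and clean, so the signs must be tracked carefully through the two multiplications. The conceptual content is slight — substituting the ODE drops the order of $\varphi_2,\theta_2$, and the chain-rule identity $\grave u_1 = v_0 u_1$ is what converts the temporal combination $\mu_0\varphi_1 + \eta_0\theta_1$ into the intrinsic invariant $u_1=\tau_0$.
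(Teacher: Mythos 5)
Your proof is correct and is essentially the paper's own argument: the paper likewise multiplies $E_{20}$ by $\varphi_{1}$ and $E_{30}$ by $\theta_{1}$ to eliminate $\varphi_{2},\theta_{2}$ from (\ref{v1}), observes the same cancellation of the Coriolis and $f_{0}$ terms, and then converts $\mu_{0}\varphi_{1}+\eta_{0}\theta_{1}$ into $v_{0}u_{1}$ (the paper does this via $\varphi_{1}=J_{\varphi}v_{0}$, $\theta_{1}=J_{\theta}v_{0}$ and $u_{1}=\mu_{0}J_{\varphi}+\eta_{0}J_{\theta}$, which is the same identity as your chain-rule step $\grave{u}_{1}=v_{0}u_{1}$). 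No gaps.
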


\begin{proof}
Starting from the expression (\ref{v1}), let us further reduce the following
quantity
\begin{align*}
R_{1}  &  =\rho_{0}^{3}\frac{v_{0}v_{1}}{2}=\rho_{0}^{3}\left[  \varphi
_{1}\varphi_{2}+\frac{1}{4}f_{0}\varphi_{1}\theta_{1}^{2}+g_{0}\theta
_{1}\theta_{2}\right] \\
&  =\varphi_{1}(\rho_{0}^{3}\varphi_{2})+\theta_{1}(g_{0}\rho_{0}^{3}%
\theta_{2})+\frac{1}{4}f_{0}\rho_{0}^{3}\varphi_{1}\theta_{1}^{2}%
\end{align*}
in terms of temporal invariants of lower order, as follows. In the last
expression for $R_{1}$, the second order invariants $\varphi_{2}$ and
$\theta_{2}$ can be reduced using the identities (\ref{E0}), and the
invariants $\varphi_{1}$ and $\theta_{1}$ can be "reduced" to $v_{0}$ by
(\ref{3.4}): \
\begin{align*}
R_{1}  &  =\varphi_{1}[-\rho_{0}^{2}\rho_{1}\varphi_{1}+\omega a_{0}\rho
_{0}\theta_{1}+\frac{1}{4}f_{0}\rho_{0}^{3}\theta_{1}^{2}+2\mu_{0}]\\
&  +\theta_{1}[-g_{0}\rho_{0}^{2}\rho_{1}\theta_{1}-\omega a_{0}\rho
_{0}\varphi_{1}-\frac{1}{2}f_{0}\rho_{0}^{3}\varphi_{1}\theta_{1}+2\eta
_{0}]+\frac{1}{4}f_{0}\rho_{0}^{3}\varphi_{1}\theta_{1}^{2}\\
&  =-\frac{1}{4}f_{0}\rho_{0}^{3}\varphi_{1}\theta_{1}^{2}-\rho_{0}^{2}%
\rho_{1}(\varphi_{1}^{2}+g_{0}\theta_{1}^{2})+2(\mu_{0}\varphi_{1}+\eta
_{0}\theta_{1})+\frac{1}{4}f_{0}\rho_{0}^{3}\varphi_{1}\theta_{1}^{2}\\
&  =-\rho_{0}^{2}\rho_{1}v_{0}^{2}+2v_{0}(\mu_{0}J_{\varphi}+\eta_{0}%
J_{\theta})=-\rho_{0}^{2}\rho_{1}v_{0}^{2}+2v_{0}u_{1},
\end{align*}
where at the end we have used the relation $u_{1}=\mu_{0}J_{\varphi}+\eta
_{0}J_{\theta}$. Recall that $u_{1}=\tau_{0}$ is the tangential derivative
$U_{\tau}^{\ast}$ of $U^{\ast}$ at $t=0$, which is an intrinsic invariant of
order 1. Comparison of the two expressions for $R_{1}$ yields the identity
(\ref{v1x}).
\end{proof}

Now, to calculate the reductions of $J_{\varphi}^{\prime},J_{\theta}^{\prime}$
from their expressions in (\ref{Jprime1}), we substitute the expression
(\ref{v1x}) for $v_{1}$ and the expressions (\ref{E0}) for $\varphi_{2}$ and
$\theta_{2}$, which yields%
\begin{align}
J_{\varphi}^{\prime}  &  =\frac{1}{v_{0}^{2}}(2\varphi_{2}-v_{0}%
v_{1}J_{\varphi})=\frac{1}{2}f_{0}J_{\theta}^{2}+\frac{2\omega\sin\varphi_{0}%
}{\rho_{0}^{2}v_{0}}J_{\theta}+\frac{4}{\rho_{0}^{3}v_{0}^{2}}(\mu_{0}%
-u_{1}J_{\varphi})\label{J3}\\
J_{\theta}^{\prime}  &  =\frac{1}{v_{0}^{2}}(2\theta_{2}-v_{1}J_{\theta
})=-\frac{f_{0}}{g_{0}}J_{\varphi}J_{\theta}-\frac{2\omega\sin\varphi_{0}%
}{g_{0}\rho_{0}^{2}v_{0}}J_{\varphi}+\frac{4}{\rho_{0}^{3}v_{0}^{2}}%
(\frac{\eta_{0}}{g_{0}}-u_{1}J_{\theta})\nonumber
\end{align}
This enables us to reduce the following intrinsic invariant
\ \ \ \ \ \ \ \ \ \ \
\begin{align*}
&  U_{\varphi}^{\ast}J_{\varphi}^{\prime}+U_{\theta}^{\ast}J_{\theta}^{\prime
}\\
&  =U_{\varphi}^{\ast}[\frac{1}{2}f_{0}J_{\theta}^{2}+\frac{2\omega\sin
\varphi_{0}}{\rho_{0}^{2}v_{0}}J_{\theta}+\frac{4}{\rho_{0}^{3}v_{0}^{2}}%
(\mu_{0}-u_{1}J_{\varphi})]\\
&  +U_{\theta}^{\ast}[-\frac{f_{0}}{g_{0}}J_{\varphi}J_{\theta}-\frac
{2\omega\sin\varphi_{0}}{g_{0}\rho_{0}^{2}v_{0}}J_{\varphi}+\frac{4}{\rho
_{0}^{3}v_{0}^{2}}(\frac{\eta_{0}}{g_{0}}-u_{1}J_{\theta})]\\
&  =f_{0}J_{\theta}(\frac{1}{2}U_{\varphi}^{\ast}J_{\theta}-\frac{U_{\theta
}^{\ast}}{g_{0}}J_{\varphi})+\frac{2\omega\sin\varphi_{0}}{\rho_{0}^{2}v_{0}%
}(U_{\varphi}^{\ast}J_{\theta}-\frac{U_{\theta}^{\ast}}{g_{0}}J_{\varphi})\\
&  +\frac{4}{\rho_{0}^{3}v_{0}^{2}}[(\mu_{0}^{2}+\frac{\eta_{0}^{2}}{g_{0}%
})-u_{1}(U_{\varphi}^{\ast}J_{\varphi}+U_{\theta}^{\ast}J_{\theta})]\\
&  =-\frac{1}{2}f_{0}U_{\varphi}^{\ast}J_{\theta}^{2}-\frac{f_{0}}{\sin
\varphi_{0}}w_{0}J_{\theta}-\frac{2\omega w_{0}}{\rho_{0}^{2}v_{0}}+\frac
{4}{\rho_{0}^{3}v_{0}^{2}}(\tau_{0}^{2}+w_{0}^{2}-\tau_{0}^{2})\\
&  =\frac{4w_{0}^{2}}{\rho_{0}^{3}v_{0}^{2}}-\frac{2\omega w_{0}}{\rho_{0}%
^{2}v_{0}}-f_{0}J_{\theta}(\frac{w_{0}}{\sin\varphi_{0}}+\frac{1}{2}%
U_{\varphi}^{\ast}J_{\theta})\\
&  =w_{0}K_{0}-f_{0}J_{\theta}(\frac{w_{0}}{\sin\varphi_{0}}+\frac{1}%
{2}U_{\varphi}^{\ast}J_{\theta})
\end{align*}
and hence the expression for $\tau_{1}$ in (\ref{first coeff}) reduces to%
\begin{equation}
\tau_{1}=w_{0}K_{0}+J \label{t1 redu}%
\end{equation}
where
\[
J=-f_{0}J_{\theta}(\frac{w_{0}}{\sin\varphi_{0}}+\frac{1}{2}U_{\varphi}^{\ast
}J_{\theta})+(U_{\varphi\varphi}^{\ast}J_{\varphi}^{2}+2U_{\varphi\theta
}^{\ast}J_{\varphi}J_{\theta}+U_{\theta\theta}^{\ast}J_{\theta}^{2})
\]
is a first order intrinsic invariant. Similarly, the expression for $w_{1}$ in
(\ref{first coeff}) reduces to
\begin{align}
w_{1}  &  =\frac{2u_{1}}{\rho_{0}^{2}v_{0}}(\omega-\frac{2w_{0}}{\rho_{0}%
v_{0}})+\eta_{0}\cos\varphi_{0}(J_{\varphi}^{2}+J_{\theta}^{2})+(\frac
{U_{\theta\theta}^{\ast}}{\sin\varphi_{0}}-U_{\varphi\theta}^{\ast}\sin
\varphi_{0})J_{\theta}^{2}\label{omega1x}\\
&  +(\mu_{0}\cos\varphi_{0}+\frac{U_{\theta\theta}^{\ast}}{\sin\varphi_{0}%
}-U_{\varphi\varphi}^{\ast}\sin\varphi_{0})J_{\varphi}J_{\theta}\nonumber
\end{align}

It would be rather laborious to calculate a complete reduction of $K_{1}$
starting from its order $3$ expression in (\ref{first coeff}). However, a
reduction is given implicitly by the second curvature equation $Eq2$
calculated below (see (\ref{basic system}), where the occurrence of $K_{1}$
lies in the coefficients $J_{2}$ and $J_{4}$. Solving the equation with
respect to $K_{1}$ yields the first of the following three expressions \
\begin{align}
K_{1}  &  =2K_{0}\frac{2K_{0}u_{1}\rho_{0}^{2}v_{0}-w_{1}\rho_{0}^{2}%
v_{0}-w_{0}\rho_{0}\rho_{1}+2\omega u_{1}}{\rho_{0}^{2}v_{0}(\omega\rho
_{0}v_{0}-2w_{0})}\label{K1x}\\
&  =-4\frac{2K_{0}u_{1}\rho_{0}^{2}v_{0}-w_{1}\rho_{0}^{2}v_{0}-w_{0}\rho
_{0}\rho_{1}+2\omega u_{1}}{\rho_{0}^{5}v_{0}^{3}}\nonumber\\
&  =\frac{16u_{1}}{\rho_{0}^{3}v_{0}^{2}}(\omega-\frac{2w_{0}}{\rho_{0}v_{0}%
})+\frac{4w_{1}}{\rho_{0}^{3}v_{0}^{2}}+\frac{4\ }{\rho_{0}^{5}v_{0}^{3}}%
(\rho_{0}\rho_{1}w_{0}-2\omega u_{1})\ \nonumber
\end{align}
and then elimination of $K_{0}$ twice using (\ref{K0}) yields the other two
expressions. Consequently, a complete reduction follows by substituting the
reduced expression (\ref{omega1x}) for $w_{1}$, but we shall not need this and
hence it is omitted.

\subsection{Derivation of the basic algebraic system of equations}

According to Proposition \ref{BasicProp}, the crucial problem indicated as
follows
\begin{equation}
\lbrack\text{intrinsic data}]_{p}\text{ }\rightarrow(\rho_{0},\rho_{1},v_{0}),
\label{2nd step}%
\end{equation}
is to calculate the basic temporal invariants $(\rho_{0},\rho_{1},v_{0})$ from
intrinsic data representing the local relative geometry of the pair
$(\gamma^{\ast},\nabla U^{\ast})$. The intrinsic data that we shall utilize
consist of the direction element $(J_{\varphi},J_{\theta})_{p}$ and the
following six \emph{basic intrinsic invariants}%

\begin{equation}
u_{0},u_{1},w_{0};w_{1},K_{0},K_{1}\text{ }(w_{0}\neq0,K_{0}\neq0),
\label{sixtuple}%
\end{equation}
also referred to as the \emph{basic 6-tuple} at $p$. As indicated, the
non-vanishing of $w_{0}$ and $K_{0}$ is always assumed in the sequel to avoid
cases of singular behavior. Note that $w_{0}\neq0$ means that $\gamma^{\ast}$
is transversal to the gradient flow of $U^{\ast}$ at $p$. We remark that by
(\ref{t1 redu}) we do not need the second order invariant $\tau_{1}$ in
(\ref{sixtuple}) since it can be derived from the others, cf. (\ref{t1 redu}).

The basic 6-tuple (\ref{sixtuple}) is the union of two triples of intrinsic
invariants, namely the \emph{basic} $U^{\ast}$\emph{-invariants} $(u_{0}%
,\tau_{0},w_{0})$ depending only on $U^{\ast}$ and $(J_{\varphi},J_{\theta
})_{p}$
\ \ \ \ \ \ \ \ \ \ \ \ \ \ \ \ \ \ \ \ \ \ \ \ \ \ \ \ \ \ \ \ \ \ \ \ \ \ \ \ \ \ \ \ \ \ \ \ \ \ \ \ \ \ \ \ \ \ \ \ \ \ \ \ \ \ \ \ \ \ \ \ \ \ \ \ \ \ \ \ \ \ \ \ \ \ \ \ \ \ \ \ \ \ \ \ \ \ \ \ \ \ \ \ \ \ \ \ \ \ \ \ \ \ \ \ \ \ \ \ \ \ \ \ \ \ \ \ \ \ \ \ \ \ \ \ \ \ \ \ \ \ \ \ \ \ \ \ \ \ \ \ \ \ \ \ \ \ \ \ \ \ \ \ \ \ \ \ \ \ \ \ \ \ \ \ \ \ \ \ \ \ \ \ \ \ \ \ \ \ \ \ \ \ \ \ \ \ \ \ \ \ \ \ \ \ \ \ \ \ \ \ \ \ \ \ \ \ \ \ \ \ \ \ \ \ \ \ \ \ \ \ \ \ \ \ \ \ \ \ \ \ \ \ \ \ \ \ \ \ \ \ \ \ \ \ \ \ \ \ \ \ \ \ \ \ \ \ \ \ \ \ \ \ \ \ \ \ \ \ \ \ \ \ \ \ \ \ \ \ \ \ \ \ \ \ \ \ \ \ \ \ \ \ \ \ \ \ \ \ \ \ \ \ \ \ \ \ \ \ \ \ \ \ \ \ \ \ \ \ \ \ \ \ \ \ \ \ \ \ \ \ \ \ \ \ \ \ \ \ \ \ \ \ \ \ \ \ \ \ \ \ \ \ \ \ \ \ \ \ \ \ \ \ \ \ \ \ \ \ \ \ \ \ \ \ \ \ \ \ \ \ \ \ \ \ \ \ \ \ \ \ \ \ \ \ \ \ \ \ \ \ \ \ \ \ \ \ \ \ \ \ \ \ \ \ \ \ \ \ \ \ \ \ \ \ \ \ \ \ \ \ \ \ \ \ \ \ \ \ \ \ \ \ \ \ \ \ \ \ \ \ \ \ \ \ \ \ \ \ \ \ \ \ \ \ \ \ \ \ \ \ \ \ \ \ \ \ \ \ \ \ \ \ \ \ \ \ \ \ \ \ \ \ \ \ \ \ \ \ \ \ \ \ \ \ \ \ \ \ \ \ \ \ \ \ \ \ \ \ \ \ \ \ \ \ \ \ \ \ \ \ \ \ \ \ \ \ \ \ \ \ \ \ \ \ \ \ \ \ \ \ \ \ \ \ \ \ \ \ \ \ \ \ \ \ \ \ \ \ \ \ \ \ \ \ \ \ \ \ \ \ \ \ \ \ \ \ \ \ \ \ \ \ \ \ \ \ \ \ \ \ \ \ \ \ \ \ \ \ \ \ \ \ \ \ \ \ \ \ \ \ \ \ \ \ \ \ \ \ \ \ \ \ \ \ \ \ \ \ \ \ \ \ \ \ \ \ \ \ \ \ \ \ \ \ \ \ \ \ \ \ \ \ \ \ \ \ \ \ \ \ \ \ \ \ \ \ \ \ \ \ \ \ \ \ \ \ \ \ \ \ \ \ \ \ \ \ \ \ \ \ \ \ \ \ \ \ \ \ \ \ \ \ \ \ \ \ \ \ \ \ \ \ \ \ \ \ \ \ \ \ \ \ \ \ \ \ \ \ \ \ \ \ \ \ \ \ \ \ \ \ \ \ \ \ \ \ \ \ \ \ \ \ \ \ \ \ \ \ \ \ \ \ \ \ \ \ \ \ \ \ \ \ \ \
\begin{equation}
u_{0}=U^{\ast}(p),\tau_{0}=u_{1}=U_{\mathbf{\tau}}^{\ast}(p),w_{0}%
=U_{\mathbf{\nu}}^{\ast}(p)\text{,} \label{U*-invariants}%
\end{equation}
and the \emph{basic curvature invariants} $(w_{1},K_{0},K_{1})$ reflecting
some of the local geometry of $(\gamma^{\ast},\nabla U^{\ast})$ at $p$.

Now, we shall set up a system of three algebraic equations used to calculate
$(\rho_{0},\rho_{1},v_{0})$. The coefficients of the system are rational
functions of the invariants in (\ref{sixtuple}), together with the parameters
$(\omega,h)$. For notational convenience, let us introduce the following four
intrinsic invariants
\begin{align}
J_{1}  &  =2u_{0}-\mathfrak{S}_{0}\text{, }J_{2}=\frac{2u_{1}-\mathfrak{S}%
_{1}}{\mathfrak{S}_{0}}\label{J-symbols1}\\
\text{ }J_{3}  &  =\frac{2u_{1}}{w_{0}}\text{, \ }J_{4}=\frac{-K_{1}}%
{2K_{0}w_{0}}\nonumber
\end{align}
Our first equation is
\begin{equation}
Eq1:\rho_{0}^{3}v_{0}^{2}+2\omega\frac{v_{0}\rho_{0}}{K_{0}}=4\mathfrak{S}%
_{0}, \label{Eq1}%
\end{equation}
which is just the leftmost identity in (\ref{K0}). Another equation is derived
from the energy integral (\ref{h-int}) evaluated at time $t=0$, which combined
with $Eq1$ yields
\begin{equation}
Eq3:\rho_{0}(\rho_{1}^{2}-2h)+\frac{\omega^{2}}{\rho_{0}}-\frac{\omega}%
{2K_{0}}\rho_{0}v_{0}=2u_{0}-\mathfrak{S}_{0}\ \label{Eq3}%
\end{equation}
Finally, to obtain last equation $Eq2$, let us differentiate the curvature
equation (\ref{K*}) and evaluate at $t=0$, which yields the identity%
\begin{equation}
2\rho_{0}^{3}v_{0}v_{1}+3\rho_{0}^{2}v_{0}^{2}\rho_{1}-4v_{0}\mathfrak{S}%
_{1}+\frac{2\omega}{K_{0}}[v_{0}\rho_{1}+\rho_{0}v_{1}-\frac{K_{1}}{K_{0}}%
\rho_{0}v_{0}^{2}]=0 \label{Eq2'}%
\end{equation}
Here we replace the temporal invariant $v_{1}$ by its reduced expression
(\ref{v1x}) and make use of (\ref{Eq1}) to calculate
\ \ \ \ \ \ \ \ \ \ \ \ \ \ \ \ \ \ \ \ \ \ \ \ \ \ \ \ \ \ \ \ \ \ \ \ \ \ \
\begin{align*}
0  &  =[4v_{0}+\frac{4\omega}{K_{0}}\frac{1}{\rho_{0}^{2}}](\frac{1}{2}%
\rho_{0}^{3}v_{1})+3\rho_{0}^{2}v_{0}^{2}\rho_{1}-4v_{0}\mathfrak{S}_{1}\\
&  +\frac{2\omega}{K_{0}}v_{0}\rho_{1}-\frac{2\omega K_{1}}{K_{0}^{2}}\rho
_{0}v_{0}^{2}\\
&  =(4v_{0}+\frac{4\omega}{K_{0}}\frac{1}{\rho_{0}^{2}})(-\rho_{0}^{2}%
v_{0}\rho_{1}+2u_{1})+3\rho_{0}^{2}v_{0}^{2}\rho_{1}-4v_{0}\mathfrak{S}_{1}\\
&  +\frac{2\omega}{K_{0}}v_{0}\rho_{1}-\frac{2\omega K_{1}}{K_{0}^{2}}\rho
_{0}v_{0}^{2}\\
&  =-\rho_{0}^{2}v_{0}^{2}\rho_{1}+4(2u_{1}-\mathfrak{S}_{1})v_{0}%
+\frac{2\omega}{K_{0}}(4u_{1}\frac{1}{\rho_{0}^{2}}-\frac{K_{1}}{K_{0}}%
\rho_{0}v_{0}^{2})\\
&  =-\frac{\rho_{1}}{\rho_{0}}(\rho_{0}^{3}v_{0}^{2}+\frac{2\omega}{K_{0}}%
\rho_{0}v_{0})+4(2u_{1}-\mathfrak{S}_{1})v_{0}+\frac{2\omega}{K_{0}}%
(4u_{1}\frac{1}{\rho_{0}^{2}}-\frac{K_{1}}{K_{0}}\rho_{0}v_{0}^{2})
\end{align*}
Using $Eq1$ to substitute the first term in the last line, we further simplify
the last identity as follows:
\begin{align}
\mathfrak{S}_{0}\frac{\rho_{1}}{\rho_{0}}  &  =(2u_{1}-\mathfrak{S}_{1}%
)v_{0}+\frac{\omega}{K_{0}}(2u_{1}\frac{1}{\rho_{0}^{2}}-\frac{K_{1}}{2K_{0}%
}\rho_{0}v_{0}^{2})\nonumber\\
\frac{\rho_{1}}{\rho_{0}v_{0}}  &  =\frac{2u_{1}-\mathfrak{S}_{1}%
}{\mathfrak{S}_{0}}+\omega\lbrack\frac{2u_{1}}{K_{0}\mathfrak{S}_{0}}\frac
{1}{\rho_{0}^{2}v_{0}}-\frac{K_{1}}{2K_{0}^{2}\mathfrak{S}_{0}}\rho_{0}%
v_{0}]\nonumber\\
Eq2  &  :\frac{\rho_{1}}{\rho_{0}v_{0}}=J_{2}+\omega\lbrack J_{3}\frac{1}%
{\rho_{0}^{2}v_{0}}+J_{4}\rho_{0}v_{0}] \label{eq3}%
\end{align}
\ \ \ \ \ \ \ \ 

\begin{definition}
\label{BasicSystem}For a given energy-momentum pair $(h,\omega)$, the
\underline{basic} \underline{algebraic} \underline{system,} affiliated with
the direction element $(J_{\varphi},J_{\theta})_{p}$ and the basic curvature
invariants $(w_{1},K_{0},K_{1})$ at $p$, consists of the following three
algebraic equations with the basic temporal invariants $(\rho_{0},\rho
_{1},v_{0})$ as the variables:
\begin{align}
Eq1  &  :\rho_{0}^{3}v_{0}^{2}+2\omega\frac{\rho_{0}v_{0}}{K_{0}%
}=4\mathfrak{S}_{0}\nonumber\\
Eq2  &  :\frac{\rho_{1}}{\rho_{0}v_{0}}-\omega\lbrack J_{3}\frac{1}{\rho
_{0}^{2}v_{0}}+J_{4}\rho_{0}v_{0}]\ =J_{2}\ \label{basic system}\\
Eq3  &  :\rho_{0}(\rho_{1}^{2}-2h)+\frac{\omega^{2}}{\rho_{0}}-\frac{\omega
}{2K_{0}}\rho_{0}v_{0}=J_{1}\nonumber
\end{align}

\end{definition}

\begin{remark}
The angular momentum and energy constants are, in fact, expressible at the
moduli space level, namely in terms of intrinsic invariants and the basic
temporal invariants, as follows:%
\begin{align}
\omega &  =\frac{2w_{0}}{\rho_{0}v_{0}}-\frac{1}{2}K_{0}\rho_{0}^{2}%
v_{0}\label{omega2}\\
h  &  =\frac{1}{2}\rho_{1}^{2}+\frac{1}{8}\rho_{0}^{2}v_{0}^{2}+2\frac
{w_{0}^{2}}{\rho_{0}^{4}v_{0}^{2}}-\frac{w_{0}K_{0}+u_{0}}{\rho_{0}}+\frac
{1}{8}K_{0}^{2}\rho_{0}^{2}v_{0}^{2} \label{h2}%
\end{align}
The expression for $\omega$ is simply a restatement of $Eq1$ in
(\ref{basic system}), whereas the expression for $h$ follows from $Eq3$ or
more simply from (\ref{S7a}), by inserting the expression for $\omega$.
\end{remark}

\subsection{Solution of the basic algebraic system of equations
\ \ \ \ \ \ \ \ \ \ \ \ \ \ \ \ \ \ \ \ \ \ \ \ \ \ \ \ \ \ \ \ \ \ \ \ \ \ \ \ \ \ \ \ \ \ \ \ \ \ \ \ \ \ \ \ \ \ \ \ \ \ \ \ \ \ \ \ \ \ \ \ \ \ \ \ \ \ \ \ \ \ \ \ \ \ \ \ \ \ \ \ \ \ \ \ \ \ \ \ \ \ \ \ \ \ \ \ \ \ \ \ \ \ \ \ \ \ \ \ \ \ \ \ \ \ \ \ \ \ \ \ \ \ \ \ \ \ \ \ \ \ \ \ \ \ \ \ \ \ \ \ \ \ \ \ \ \ \ \ \ \ \ \ \ \ \ \ \ \ \ \ \ \ \ \ \ \ \ \ \ \ \ \ \ \ \ \ \ \ \ \ \ \ \ \ \ \ \ \ \ \ \ \ \ \ \ \ \ \ \ \ \ \ \ \ \ \ \ \ \ \ \ \ \ \ \ \ \ \ \ \ \ \ \ \ \ \ \ \ \ \ \ \ \ \ \ \ \ \ \ \ \ \ \ \ \ \ \ \ \ \ \ \ \ \ \ \ \ \ \ \ \ \ \ \ \ \ \ \ \ \ \ \ \ \ \ \ \ \ \ \ \ \ \ \ \ \ \ \ \ \ \ \ \ \ \ \ \ \ \ \ \ \ \ \ \ \ \ \ \ \ \ \ \ \ \ \ \ \ \ \ \ \ \ \ \ \ \ \ \ \ \ \ \ \ \ \ \ \ \ \ \ \ \ \ \ \ \ \ \ \ \ \ \ \ \ \ \ \ \ \ \ \ \ \ \ \ \ \ \ \ \ \ \ \ \ \ \ \ \ \ \ \ \ \ \ \ \ \ \ \ \ \ \ \ \ \ \ \ \ \ \ \ \ \ \ \ \ \ \ \ \ \ \ \ \ \ \ \ \ \ \ \ \ \ \ \ \ \ \ \ \ \ \ \ \ \ \ \ \ \ \ \ \ \ \ \ \ \ \ \ \ \ \ \ \ \ \ \ \ \ \ \ \ \ \ \ \ \ \ \ \ \ \ \ \ \ \ \ \ \ \ \ \ \ \ \ \ \ \ \ \ \ \ \ \ \ \ \ \ \ \ \ \ \ \ \ \ \ \ \ \ \ \ \ \ \ \ \ \ \ \ \ \ \ \ \ \ \ \ \ \ \ \ \ \ \ \ \ \ \ \ \ \ \ \ \ \ \ \ \ \ \ \ \ \ \ \ \ \ \ \ \ \ \ \ \ \ \ \ \ \ \ \ \ \ \ \ \ \ \ \ \ \ \ \ \ \ \ \ \ \ \ \ \ \ \ \ \ \ \ \ \ \ \ \ \ \ \ \ \ \ \ \ \ \ \ \ \ \ \ \ \ \ \ \ \ \ \ \ \ \ \ \ \ \ \ \ \ \ \ \ \ \ \ \ \ \ }%

We may as well consider the system of rational equations (\ref{basic system})
from a purely algebraic point of view, with the three real variables
$(\rho_{0},\rho_{1},v_{0})$.The coefficients of the equations are themselves
rational expressions involving eight real constants (parameters), namely a
pair $(h,\omega)$ and a 6-tuple (\ref{sixtuple}). In Definition
\ref{BasicSystem} we also say that the system (\ref{basic system}) is
affiliated with a given 6-tuple (\ref{sixtuple}), irrespective of the
interpretation of the latter. Anyhow, we shall be interested only in
\underline{admissible} solutions $(\rho_{0},\rho_{1},v_{0})$, meaning that
$\rho_{0}>0$ and $v_{0}>0$.

\subsubsection{The special case of vanishing angular momentum}

For comparison reasons, let us recall the special case of \ three-body motions
with $\omega=0$ discussed in \cite{HS-2}, in which case the equations
(\ref{basic system}) simplify to
\begin{equation}
\rho_{0}^{3}v_{0}^{2}=4\mathfrak{S}_{0},\text{ }\frac{\rho_{1}}{\rho_{0}v_{0}%
}=J_{2},\text{\ }\rho_{0}(\rho_{1}^{2}-2h)=J_{1}\text{ }
\label{basic system 0}%
\end{equation}
Then there can be at most one admissible solution, uniquely given by
\begin{equation}
v_{0}=2\sqrt{\frac{\mathfrak{S}_{0}}{\rho_{0}^{3}}}\text{, \ }\rho_{1}%
=2J_{2}\text{\ }\sqrt{\frac{\mathfrak{S}_{0}}{\rho_{0}}}\text{, \ }\rho
_{0}=\frac{1}{2h}(4J_{2}^{2}\mathfrak{S}_{0}-J_{1})\text{ when }h\neq0.
\label{solution-0}%
\end{equation}

For completeness, let us also recall the special case of $(h,\omega)=(0,0)$.
Then the scaling symmetry of 3-body motions, which scales $\rho$ and the time
$t$, keeps $(h,\omega)=(0,0)$ invariant and does not affect the geometric
shape curve. Thus we can choose $\rho_{0}$ freely, and the above expressions
for $v_{0}$ and $\rho_{1}$ still hold. In any case, it follows from the
formulae (\ref{solution-0}) that positivity of $\rho_{0}$ and $v_{0}$ poses
some obvious conditions on the basic intrinsic invariants (\ref{sixtuple}).

\ \ \ 

\subsubsection{The general case $\omega\neq0$}

To study the basic algebraic system (\ref{basic system}) in general, let us
for convenience introduce the variables%

\[
x=\rho_{0},\text{ \ }y=\rho_{0}v_{0},\text{ \ }z=\rho_{1}%
\]
Then the system (\ref{basic system}) becomes%
\begin{align}
&  (i)\text{ \ }xy^{2}+\frac{2\omega}{K_{0}}y=4\mathfrak{S}_{0}\nonumber\\
&  (ii)\text{ \ }\frac{z}{y}-\omega\lbrack J_{3}\frac{1}{xy}+J_{4}%
y\ ]=J_{2}\label{xyz}\\
&  (iii)\text{ \ }x(z^{2}-2h)+\frac{\omega^{2}}{x}-\frac{\omega}{2K_{0}%
}y=J_{1}\nonumber
\end{align}

As an approach to solving the system, we propose to eliminate $x$ and $z$ and
seek an equation solely involving $y$, as follows. From equation (i)%
\begin{equation}
x=\frac{4\mathfrak{S}_{0}}{y^{2}}-\frac{2\omega}{K_{0}}\frac{1}{y}, \label{x}%
\end{equation}
which by substitution into equation (ii) yields%

\begin{align}
z  &  =J_{2}y+\omega y^{2}\left\{  \frac{J_{3}}{4\mathfrak{S}_{0}%
-\frac{2\omega}{K_{0}}y}+J_{4}\right\} \nonumber\\
&  =J_{2}y+\omega y^{2}\left\{  \frac{J_{5}+\omega J_{6}y}{4\mathfrak{S}%
_{0}-\frac{2\omega}{K_{0}}y}\right\} \nonumber\\
&  =\frac{J_{2}y(4\mathfrak{S}_{0}-\frac{2\omega}{K_{0}}y)+\omega y^{2}%
(J_{5}+\omega J_{6}y)}{4\mathfrak{S}_{0}-\frac{2\omega}{K_{0}}y}\nonumber\\
&  =\frac{y[4\mathfrak{S}_{0}J_{2}+\omega(J_{5}-\frac{2J_{2}}{K_{0}}%
)y+\omega^{2}J_{6}y^{2}]}{4\mathfrak{S}_{0}-\frac{2\omega}{K_{0}}y}.\nonumber
\end{align}
Thus, we arrive at the expression%
\begin{equation}
z=\frac{y(a+b\omega y+c\omega^{2}y^{2})}{d+e\omega y}, \label{z}%
\end{equation}
where we have introduced for notational convenience the intrinsic invariants
\begin{align}
J_{5}  &  =J_{3}+4\mathfrak{S}_{0}J_{4}=\frac{2u_{1}}{w_{0}}-\frac{2K_{1}%
}{K_{0}^{2}}\text{, \ }J_{6}=-\frac{2J_{4}}{K_{0}}=\frac{K_{1}}{K_{0}^{2}%
w_{0}}\label{J5-6}\\
a  &  =4\mathfrak{S}_{0}J_{2}\ ,\text{ }\ \text{ }b=J_{5}-\frac{2J_{2}}{K_{0}%
},\text{ }c=J_{6}\text{, }d=4\mathfrak{S}_{0}\text{, }e=-\frac{2}{K_{0}}
\label{abc}%
\end{align}
$\ $

Substitution of the expressions (\ref{x}) and (\ref{z}) for$\ x$ and $z$ into
equation (iii) yields%
\[
\frac{(a+\omega by+\omega^{2}cy^{2})^{2}}{d+\omega ey}-2h\frac{(d+\omega
ey)}{y^{2}}+\frac{\omega^{2}y^{2}}{(d+\omega ey)}+\frac{\omega e}{4}y=J_{1},
\]
which we state as the following polynomial equation of degree $\leq6$
\begin{align}
\Pi(y)=  &  y^{2}[\beta_{0}+\beta_{1}\omega y+\ \beta_{2}\omega^{2}%
y^{2}+\ \beta_{3}\omega^{3}y^{3}+\ \beta_{4}\omega^{4}y^{4}]\label{poly1}\\
&  +h[\alpha_{0}+\alpha_{1}\omega y+\alpha_{2}\omega^{2}y^{2}]=0\nonumber
\end{align}
where\textbf{ }the role of $(h,\omega)$ is made more transparent, and
$\alpha_{i},\beta_{i}$ are the following intrinsic invariants
\begin{align}
\alpha_{0}  &  =-2d^{2},\beta_{0}=a^{2}-J_{1}d,\alpha_{1}=-4de,\beta_{1}%
=\frac{de}{4}+2ab-J_{1}e,\label{coeff1}\\
\alpha_{2}  &  =-2e^{2},\beta_{2}=1+2ac+b^{2}+\frac{e^{2}}{4},\beta
_{3}=2bc,\beta_{4}=c^{2}\nonumber
\end{align}

We may also prefer to study the equation (\ref{poly1}) as a polynomial
equation in $Y=\omega y:$%
\begin{align}
P(Y)=  &  Y^{2}[\beta_{0}+\beta_{1}Y+\ \beta_{2}Y^{2}+\ \beta_{3}Y^{3}%
+\ \beta_{4}Y^{4}]\label{poly2}\\
&  +H[\alpha_{0}+\alpha_{1}Y+\alpha_{2}Y^{2}]=0\nonumber
\end{align}
with the explicit dependence on the parameters $(h,\omega)$ concentrated in
the single parameter
\[
H=h\omega^{2}\text{,}%
\]
as expected in view of Remark \ref{scaling}. For a given value of $H$, a root
$Y$ of the polynomial $P(Y)$ is said to be \underline{admissible} if the
following inequality holds%
\begin{equation}
sign(K_{0})Y<sign(K_{0})2w_{0} \label{admiss}%
\end{equation}

Now, let us fix the pair $(h,\omega)$ and hence also $H$. Since the polynomial
in (\ref{poly2}) has at most six roots, it is clear that the algebraic system
(\ref{xyz}) has at most six solutions $(x,y,z)$, obtained by expressing $x$
and $z$ in terms of $y=\omega^{-1}Y$ as explained above.

Moreover, a solution is real, namely $x,y,z$ are real, if and only if $Y$ \ is
a real root. Finally, a solution $(x,y,z)=(\rho_{0},\rho_{0}v_{0},\rho_{1})$
is admissible if and only if \ $x>0$ and $y>0$. In fact, by formula (\ref{x})
positivity of $x$ is simply the condition (\ref{admiss}) that $Y$ is
admissible, whereas positivity of $y$ demands $Y$ \ to have the same sign as
$\omega$.

For the applications involving shape curves, we prefer to regard the given
6-tuple $(u_{0},u_{1},w_{0};w_{1},K_{0},K_{1})$ as the basic 6-tuple at a
regular point $p$ of an oriented curve $\gamma^{\ast}$ on the shape sphere,
and hence the 6-tuple is the union of the $U^{\ast}$-triple $(u_{0}%
,u_{1},w_{0})$ and the basic curvature invariants $(w_{1},K_{0},K_{1})$ of
$\gamma^{\ast}$ at $p$. In fact, the direction element $(J_{\varphi}%
,J_{\theta})_{p}$ and the $U^{\ast}$-triple determine each other.

As an immediate consequence of the above observations we state the following
results, assuming $w_{0}\neq0,K_{0}\neq0$ as before.

\begin{proposition}
\label{A}For a given pair $(h,\omega)$, consider the basic algebraic system
(\ref{xyz}) affiliated with a direction element $(J_{\varphi},J_{\theta})_{p}$
and triple $(w_{1},K_{0},K_{1})$ of real numbers. Then there is a one-to-one
correspondence between the admissible solutions $(x,y,z)$ of the system and
solutions $\bar{\gamma}(t)$ of the ODE (\ref{NewtonRedu}) at the
energy-momentum level $(h,\omega)$, whose shape curves $\gamma^{\ast}$ passing
throught $p$ have i) the given direction $(J_{\varphi},J_{\theta})_{p}$, and
ii) the triple $(w_{1},K_{0},K_{1})$ as the basic curvature invariants at $p$.
\end{proposition}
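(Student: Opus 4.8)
The plan is to exhibit two mutually inverse maps between the set of admissible solutions $(x,y,z)$ of (\ref{xyz}) and the set of solutions $\bar\gamma(t)$ of (\ref{NewtonRedu}) at level $(h,\omega)$ whose shape curve passes through $p$ with direction $(J_\varphi,J_\theta)_p$ and has $(w_1,K_0,K_1)$ as its basic curvature invariants at $p$. The forward map $\Phi$ is the one already implicit in the preceding discussion: from an admissible $(x,y,z)$ set $\rho_0=x$, $v_0=y/x$, $\rho_1=z$, form the initial data (\ref{initial}) out of $(\rho_0,\rho_1,v_0)$, the point $p=(\varphi_0,\theta_0)$, and $(\varphi_1,\theta_1)=(v_0 J_\varphi,v_0 J_\theta)$, and let $\bar\gamma(t)$ be the unique solution of (\ref{NewtonRedu}) with scalar angular momentum $\omega$ supplied by Proposition \ref{BasicProp}. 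Two of the required properties are then immediate: the shape point and initial direction of $\bar\gamma$ are $p$ and $(J_\varphi,J_\theta)_p$ by construction, and the energy of $\bar\gamma$ equals $h$, because the first integral (\ref{h-int}) evaluated at $t=0$, after the term $\rho_0^3v_0^2$ is eliminated by means of $Eq1$, is precisely $Eq3$, which $(x,y,z)$ satisfies.

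The substantive part of the $\Phi$-direction is to check that the shape curve of $\bar\gamma$ carries the prescribed curvature invariants at $p$. Since (\ref{K*b}) holds identically along every solution of (\ref{NewtonRedu}) (it was obtained from equations (ii) and (iii) of (\ref{NewtonRedu})), evaluating it at $t=0$ and comparing with $Eq1$ — recalling that $w_0=U_\nu^\ast(p)$ is fixed by $U^\ast$ and the direction at $p$ — forces the geodesic curvature $\tilde K_0$ of $\bar\gamma$ at $p$ to satisfy $\tilde K_0=K_0$, using $\rho_0,v_0>0$ and $K_0\neq0$. For $w_1$: by the complete reduction (\ref{omega1x}), and rewriting $\omega-2w_0/(\rho_0v_0)$ by means of $Eq1$, one finds that $\tilde w_1=-u_1\tilde K_0+C$, where $C$ depends only on $p$, $U^\ast$, and the direction element; since $\tilde K_0=K_0$ and the prescribed $w_1$ is the value of this same expression, $\tilde w_1=w_1$. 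Finally, differentiating (\ref{K*b}) at $t=0$, i.e. (\ref{Eq2'}), and substituting the reduction (\ref{v1x}) of $v_1$, shows that $\bar\gamma$ satisfies $Eq2$ with $(\tilde w_1,\tilde K_1)$ in place of $(w_1,K_1)$; subtracting this from the $Eq2$ satisfied by $(x,y,z)$ and using $\tilde w_1=w_1$, $\tilde K_0=K_0$ together with the identity $\omega\rho_0v_0-2w_0=-\tfrac12 K_0\rho_0^3v_0^2$ coming from $Eq1$ collapses the difference to a nonzero multiple of $\tilde K_1-K_1$, hence $\tilde K_1=K_1$.

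The backward map $\Psi$ is the easy direction: given such a $\bar\gamma(t)$, put $x=\rho(0)$, $y=\rho(0)v(0)$, $z=\dot\rho(0)$; regularity of $p$ gives $\rho(0)>0$ and $v(0)>0$, so $(x,y,z)$ is admissible, and the derivation recorded in Section 3.2 shows it solves the system — $Eq1$ from (\ref{K*b}) at $t=0$, $Eq2$ from its derivative (\ref{Eq2'}) at $t=0$ together with (\ref{v1x}) and $Eq1$, and $Eq3$ from (\ref{h-int}) at $t=0$ together with $Eq1$. One then has $\Psi\circ\Phi=\mathrm{id}$, since the curve produced by Proposition \ref{BasicProp} has $\rho(0)=\rho_0$, $v(0)=v_0$, $\dot\rho(0)=\rho_1$, and $\Phi\circ\Psi=\mathrm{id}$ by the uniqueness clause of that proposition. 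Since the polynomial $P(Y)$ controlling the auxiliary variable $Y=\omega y$ has degree at most six, this also recovers that there are at most six curves of the stated kind.

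The step I expect to be the real obstacle is the recovery of $w_1$, and through it of $K_1$, in the $\Phi$-direction. The crux is that along solutions of (\ref{NewtonRedu}) the triple $(w_1,K_0,K_1)$ is not independent: as the reduction $\tilde w_1=-u_1\tilde K_0+C$ above exhibits (equivalently, by the Frenet relation on $S^2$ together with the Hessian of $U^\ast$), $w_1$ is already determined by the direction element at $p$ and by $K_0$, so that the single scalar relation contributed by $Eq2$ is exactly what is left over to pin down $K_1$ once $\tilde w_1=w_1$ is in hand. Consequently the proposition is to be read with $(w_1,K_0,K_1)$ tacitly consistent — that is, realizable as the curvature data at $p$ of some spherical curve with direction $(J_\varphi,J_\theta)_p$, which is automatic in the setting of the theorem being proved — and for such data the correspondence is the bijection asserted.
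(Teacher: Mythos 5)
Your proof is correct and follows the route the paper takes implicitly: Proposition \ref{A} is stated there as an ``immediate consequence'' of Proposition \ref{BasicProp} together with the derivation of $Eq1$--$Eq3$ from the ODE and the curvature formula, which is precisely the pair of mutually inverse maps you construct. Your explicit verification that the constructed curve recovers $K_0$, $w_1$ and $K_1$ at $p$ --- in particular the observation that $w_1$ is already forced by the direction element and $K_0$ via the Frenet/Hessian identity, so the triple $(w_1,K_0,K_1)$ must be read as consistent for the correspondence to be non-vacuous --- supplies detail the paper leaves tacit, and it is sound.
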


\begin{proposition}
For a fixed value of $H=h\omega^{2}$, the negative admissible roots $Y$ of the
polynomial equation (\ref{poly2}) are in one-to one correspondence with the
admissible solutions $(x,y,z)$ in Proposition \ref{A}, for any fixed choice
$(h,\omega)$ with $\omega<0$. Similarly, the positive admissible roots $Y$
correspond to the admissible solutions $(x,y,z)$ when $\omega>0$.
\end{proposition}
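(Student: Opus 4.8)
The plan is to collect the facts already assembled in the paragraphs preceding the statement and repackage them, with attention to the sign bookkeeping, into the asserted bijection. First I would recall the elimination procedure: from equation (i) of the basic system (\ref{xyz}) one solves for $x$ as in (\ref{x}); substituting into (ii) gives $z$ as in (\ref{z}); and substituting both into (iii) produces the single polynomial equation $\Pi(y)=0$, equivalently $P(Y)=0$ in the variable $Y=\omega y$, see (\ref{poly1})--(\ref{poly2}). I would then note, by inspecting (\ref{coeff1}) together with (\ref{abc}) and (\ref{J5-6}), that the coefficients $\alpha_{i},\beta_{i}$ of $P$ are intrinsic invariants depending only on the direction element and the triple $(w_{1},K_{0},K_{1})$, and that the energy--momentum pair enters $P$ only through $H=h\omega^{2}$. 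Hence the root set of $P$ is determined by $H$ and the intrinsic data alone, and is common to all $(h,\omega)$ with the prescribed value of $H$; this is what licenses the phrase ``for a fixed value of $H$''.

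Next I would build the correspondence explicitly. Given a real root $Y$ of $P$ with $sign(Y)=sign(\omega)$ satisfying the admissibility inequality (\ref{admiss}), set $y=Y/\omega>0$ and define $x,z$ by (\ref{x}), (\ref{z}). The key algebraic identity -- and the crux of the whole argument -- is that equation (i) of (\ref{xyz}) forces
\[
xy^{2}=4\mathfrak{S}_{0}-\tfrac{2\omega}{K_{0}}y=\tfrac{2}{K_{0}}(2w_{0}-Y)=d+e\omega y ,
\]
so the denominator in (\ref{z}) is precisely $xy^{2}$; the admissibility condition (\ref{admiss}) reads exactly $sign(K_{0})(2w_{0}-Y)>0$, i.e. $x>0$, which makes this denominator nonzero and yields $x>0$, $y>0$. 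By construction $(x,y,z)$ then solves (i), (ii), and it solves (iii) because $\Pi(y)=0$, so it is an admissible solution. Conversely, an admissible solution $(x,y,z)=(\rho_{0},\rho_{0}v_{0},\rho_{1})$ has $x,y>0$, hence $y\neq0$ and $d+e\omega y=xy^{2}\neq0$, so the elimination is reversible: $Y=\omega y$ is a real root of $P$, admissible because the inequality $x>0$ coincides with (\ref{admiss}), with $sign(Y)=sign(\omega)$ since $y>0$. The two assignments are mutually inverse because $y$ is literally the middle coordinate of $(x,y,z)$ and $Y=\omega y$; counting distinct roots, this is a one-to-one correspondence between admissible solutions $(x,y,z)$ of (\ref{xyz}) and the real roots of $P$ that are admissible and carry the sign of $\omega$.

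Finally I would specialize the sign: for $\omega<0$ the requirement $sign(Y)=sign(\omega)$ selects the negative admissible roots, and for $\omega>0$ the positive admissible roots, which is exactly the two halves of the statement; composing further with Proposition \ref{A} then matches these with the corresponding moduli curves $\bar{\gamma}(t)$. I do not expect a genuine obstacle here, since the proposition is essentially a restatement of what has already been proven. The one point that needs real care is the identity $d+e\omega y=xy^{2}$: it is what converts the geometric positivity demands $\rho_{0}>0$, $v_{0}>0$ into the clean algebraic inequality (\ref{admiss}) and simultaneously rules out any spurious pole of the rational expression (\ref{z}) (note in particular that admissibility excludes $Y=2w_{0}$, and that imposing the sign of $\omega$ automatically excludes $Y=0$); getting this link stated precisely is the heart of the bookkeeping.
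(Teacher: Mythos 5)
Your proposal is correct and follows the same route as the paper, which states this proposition as an immediate consequence of the preceding observations: the elimination $x=4\mathfrak{S}_{0}/y^{2}-2\omega/(K_{0}y)$, $z$ via (\ref{z}), the equivalence of $x>0$ with the admissibility inequality (\ref{admiss}), and the requirement $\mathrm{sign}(Y)=\mathrm{sign}(\omega)$ coming from $y>0$. Your explicit identity $xy^{2}=d+e\omega y=\tfrac{2}{K_{0}}(2w_{0}-Y)$ is exactly the bookkeeping the paper leaves implicit, and it is stated correctly.
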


Thus we end up with the following two cases:

\begin{itemize}
\item The basic algebraic system has no admissible solution $(\rho_{0}%
,\rho_{1},v_{0}).$ Then there is no 3-body motion $\gamma(t)$ whose shape
curve passes through $p$ with the given direction $(J_{\varphi},J_{\theta
})_{p}$ and with $(w_{1},K_{0},K_{1})$ as the basic curvature invariants.

\item The basic algebraic system at $p$ has an admissible solution $(\rho
_{0},\rho_{1},v_{0})$. As shown in Section 3.1, (\ref{1st reduction}), an
admissible solution together with the direction element $(J_{\varphi
},J_{\theta})_{p}$ yields the initial data (\ref{initial}) and hence a
solution $\bar{\zeta}(t)$ of the ODE (\ref{NewtonRedu}). As a consequence,
$\bar{\zeta}(t)$ is the moduli curve of a 3-body motion, whose shape curve
$\zeta^{\ast}$ has the direction $(J_{\varphi},J_{\theta})_{p}$ and basic
curvature invariants $(w_{1},K_{0},K_{1})$ at $p$. We remark that the number
of admissible solutions cannot exceed 6; in the examples we have studied so
far the number is at most 2.
\end{itemize}

\subsection{ Case studies with examples}

Without specifying the direction element $(J_{\varphi},J_{\theta})_{p}$ and
the (mass dependent) potential function $U^{\ast}$ on the shape sphere, we
shall start from a list of six numbers viewed as the basic 6-tuple
(\ref{sixtuple}) at $p$ of the shape curve $\gamma^{\ast}$ of a ficticious
three-body motion, for a given energy-momentum pair $(h,\omega)$. Then we
inquire whether the affiliated basic algebraic system (\ref{basic system}) has
admissible solutions $(\rho_{0},\rho_{1},v_{0})$. For each such solution we
know there is a solution $\overline{\gamma}(t)$ of the ODE (\ref{NewtonRedu})
in the moduli space $\bar{M}$ realizing the same basic 6-tuple (\ref{sixtuple}).\ 

Note that once an admissible root $Y$ of the polynomial $P(Y)$ is calculated,
one obtains the solution $(x,y,z)$ of the system (\ref{xyz}) by using the
formulas in (\ref{J-symbols1}), (\ref{x}) -- (\ref{abc}). We perform the
calculations using Maple, and high precision is needed.

\subsubsection{Example}

Let us choose (randomly) the following six numbers
\[
u_{0}=0.3,u_{1}=0.4,w_{0}=1.4,w_{1}=-0.3,K_{0}=0.4,K_{1}=0.2;\text{ }%
\]
and consider the basic algebraic system (\ref{xyz}) affiliated with this
6-tuple, and with $(h,\omega)$ unspecified. As before, we put $H=$
$h\omega^{2}$ and turn to the associated polynomial equation (\ref{poly2}).

Let us first choose $H=1$. The polynomial has four complex and two real roots%
\[
Y_{1}=-1.165697412\text{, }Y_{2}=1.521207930
\]
Since both $Y_{i}$ satisfy the inequality $Y<2w_{0}=2.8$, they are admissible
by (\ref{admiss}). Moreover, $\omega$ must have the same sign as $Y_{i}$, so
this yields the following triples of basic temporal invariants%
\begin{align*}
\text{Case 1}  &  \text{: }\omega<0;\text{ }\\
(\rho_{0},\rho_{1},v_{0})  &  =(14.59210391\cdot\omega^{2},-1.302577877\cdot
\omega^{-1},-0.07988549281\cdot\omega^{-3})\\
\text{Case 2}  &  \text{: }\omega>0;\\
\text{ }(\rho_{0},\rho_{1},v_{0})  &  =(2.763075661\cdot\omega^{2}%
,1.227863217\cdot\omega^{-1},0.5505487785\cdot\omega^{-3})
\end{align*}
In each case we are free to choose the energy-momentum pair $(h,\omega)$
subject to the constraint $h\omega^{2}=1$, and for each choice there is a
unique solution $\bar{\gamma}(t)$ of the ODE (\ref{NewtonRedu}) whose shape
curve $\gamma^{\ast}$ has basic 6-tuple $(u_{0},...,K_{1})$ as given above, at
any given direction element $(J_{\varphi},J_{\theta})_{p}$ compatible with the
$U^{\ast}$-invariants $(u_{0},u_{1},w_{0})$.

As long as $H>0$ the polynomial $P(Y)$ will have two admissible roots, and of
different sign, so the case $H=1$ is typical for energy $h>0$. But for
$H\leq0$ all roots are complex, hence there is no admissible solution in the
case of $h\leq0$.

\subsubsection{Example}

The following 6-tuple and associated polynomial (\ref{poly2}) are calculated:
\begin{align*}
\ u_{0}  &  =1,u_{1}=-0.01,w_{0}=0.2,w_{1}=0,K_{0}=0.1,K_{1}=0.0567\\
P(Y)  &  =0.00199-0.0241\cdot Y+0.0864\cdot Y^{2}-0.128\cdot Y^{3}+0.0804\cdot
Y^{4}=0
\end{align*}
Let us proceed as in the previous example. The polynomial has two admissible
roots $Y_{i}$, hence for fixed $\omega>0\ $\ we have an example with two
admissible solutions (basic temporal invariants)
\begin{align*}
Y_{1}  &  =0.137;\text{ \ }(\rho_{0},\rho_{1},v_{0})=(279.8\cdot\omega
^{2},0.049\cdot\omega^{-1},0.00049\cdot\omega^{-3})\\
\text{ \ }Y_{2}  &  =0.390;\text{ \ }(\rho_{0},\rho_{1},v_{0})=(0.510\cdot
\omega^{2},-0.198\cdot\omega^{-1},0.776\cdot\omega^{-3})
\end{align*}

\subsubsection{Example}

Consider the special case $h=0$, so the polynomial (\ref{poly1}) has the
factor $y^{2}$, which after cancelling yields an equation of degree 4
\begin{align}
P(y)  &  =\beta_{0}+\beta_{1}\omega y+\ \beta_{2}\omega^{2}y^{2}+\ \beta
_{3}\omega^{3}y^{3}+\ \beta_{4}\omega^{4}y^{4}=0,\label{deg4}\\
\beta_{0}  &  =a^{2}-J_{1}d=4\mathfrak{S}_{0}(4\mathfrak{S}_{0}J_{2}^{2}%
-J_{1}),\text{ \ }\nonumber
\end{align}
and equation (\ref{poly2}) with $H=0$ reads
\begin{equation}
P(Y)=\beta_{0}+\beta_{1}Y+\beta_{2}Y^{2}+\beta_{3}Y^{3}+\beta_{4}Y^{4}=0.
\label{F}%
\end{equation}
Moreover, $P(Y)$ becomes a quadratic polynomial if and only if $K_{1}=0$, by
(\ref{J5-6}), (\ref{abc}), (\ref{coeff1}).

Now, choose the following 6-tuple (\ref{sixtuple})
\[
\ u_{0}=1,u_{1}=-0.1,w_{0}=0.2,w_{1}=0,K_{0}=0.1,K_{1}=0,
\]
which yields a qadratic polynomial with two admissible roots
\begin{align*}
P(Y)  &  =6.4\cdot10^{-3}-0.416\cdot Y+1.02\cdot Y^{2}\\
Y_{1}  &  \approx0.016\text{, \ }Y_{2}\approx0.392
\end{align*}
Hence, for $\omega>0$ the algebraic system (\ref{basic system}) has the
following two admissible solutions $(\rho_{0},\rho_{1},v_{0}):$
\begin{align*}
\text{Case 1}  &  \text{: }\rho_{0}\approx295\cdot10^{4}\cdot\omega^{2}\text{,
}\rho_{1}\approx-0.0016\cdot\omega^{-1}\text{, }v_{0}\approx5.35\cdot
10^{-7}\omega^{-3}\text{ }\\
\text{Case 2}  &  \text{: }\rho_{0}\approx1.06\cdot\omega^{2}\text{, }\rho
_{1}\approx-0.979\cdot\omega^{-1}\text{, }v_{0}\approx0.368\cdot\omega^{-3}%
\end{align*}
As a control, we verify that $h=0$ by inserting these values into the energy
integral (using high precision arithmetic)
\[
h=\frac{1}{2}\rho_{1}^{2}+\frac{\rho_{0}^{2}}{8}v_{0}^{2}+\frac{\omega^{2}%
}{2\rho_{0}^{2}}-\frac{u_{0}}{\rho_{0}}%
\]

\subsubsection{Example (linear motions)}

Observe that in order to obtain an algebraic system with three independent
equations such as (\ref{xyz}), the nonvanishing of some of the $U^{\ast}%
$-invariants (\ref{U*-invariants}) is crucial. But for comparison reason,
consider the extreme case $U^{\ast}=0$ where this fails, namely linear 3-body
motions (in absence of forces). \ Now, $Eq1$ of the system (\ref{basic system}%
) reads%
\[
K^{\ast}=-\frac{2\omega}{\rho^{2}v},
\]
and this holds everywhere along the motion. Therefore, in the case $\omega=0$
the shape curve $\gamma^{\ast}$ has vanishing curvature and is therefore
confined to a geodesic circle on the 2-sphere. In the case $\omega\neq0$,
$K^{\ast}$ cannot vanish, but $Eq2$ on the form (\ref{Eq2'}) together with
(\ref{v1x}) amount to $K_{1}=0$, namely $K^{\ast}$ is a constant.
Consequently, $\gamma^{\ast}$ is confined to a circle which is not a geodesic
circle. Reconstruction of the moduli curve $\bar{\gamma}(t)$ from the shape
curve alone is clearly not possible; there is little information "stored" in
the constant function $K^{\ast}.$

On the other hand, the simple example with $U^{\ast}$ a nonzero constant can
be solved explicitly. Now equation (i) of the ODE system (\ref{basic system})
decouples from the other two and can be solved explicity, as a 1-dimensional
Kepler problem with initial values $\rho_{0},\rho_{1}$. Then substitution into
the energy integral, for given values of $(h,\omega)$, yields the value of
$v_{0}^{2}$, and consequently one obtains the triple $(\rho_{0},\rho_{1}%
,v_{0})$. As before, together with a given point and direction element on the
shape sphere this yields the data to solve the initial value problem in
$\bar{M}$.

\subsubsection{Example (Henon 2)}

We consider a concrete example with three bodies of unit mass in the plane at
initial positions ($x_{i},0)$ on the x-axis and initial velocities $(0,y_{i})$
in the y-axis direction, namely%

\begin{align}
x_{1}  &  =-1.0207041786,\text{ }x_{2}=2.0532718983,\text{ }x_{3}%
=-1.0325677197\label{data}\\
y_{1}  &  =9.1265693140,\text{ }y_{2}=0.0660238922,\text{ }y_{3}%
=-9.1925932061\nonumber\\
h  &  =-1.040039,\text{ }\omega=0.312013\nonumber
\end{align}

We infer from these data that the shape curve $\gamma^{\ast}$ starts out at a
point $p=(\varphi_{0},\theta_{0})$ on the equator circle $\varphi=\pi/2$ and
with initial velocity perpendicular to this circle. Moreover, we infer
\[
\rho_{1}=\theta_{1}=u_{1}=w_{1}=\beta_{3}=\beta_{4}=0
\]
Calculation of the basic 6-tuple of $\gamma^{\ast}$ at $p$ :
\[
(u_{0},u_{1,}w_{0},w_{1},K_{0},K_{1})=(213.6058,0,-31771.876,0,-75.30872,0)
\]

Now, we can determine the associated polynomial $P(Y)$, which has degree 4 and
has two negative and two positive roots $Y_{i}$. The positive roots are
admissible and there are two admissible solutions $(\rho_{0},\rho_{1},v_{0})$
of the basic algebraic system (\ref{basic system}), as follows:%
\begin{align*}
Y_{1}  &  =89.01744668991:\rho_{0}=0.02076,\rho_{1}=0,\text{ }v_{0}%
=13741.798\\
Y_{2}  &  =8.083161335258:\rho_{0}=2.51475,\rho_{1}=0,\text{ }v_{0}=10.30184
\end{align*}

\begin{remark}
In the above example, we observe that the original 3-body motion is one of the
two 3-body motions generated by the roots $Y_{i}$, namely by the root $Y_{2.}$
Calculation of the corresponding initial data set (\ref{initial}) for the ODE
(\ref{NewtonRedu}) can be checked to be the same as the initial data set
deduced from the information (\ref{data}). On the other hand, the root $Y_{1}$
yields another 3-body motion, but we claim its shape curve is different from
the shape curve of the original motion, although they have the same basic
6-tuple and hence are close to each other in the vicinity of $p$. We shall
refer to them as a pair of \emph{companion solutions}.

The original shape curve $\gamma^{\ast}$ has, in fact, been "found" by
numerical experiments to be periodic, see
http://three-body.ipb.ac.rs/henon.php and the example Henon 2. But we leave
open the question of whether its companion is also periodic or if the two
shape curves are globally close to each other.
\end{remark}

\end{document}